\newcommand\dotover{\leavevmode\cleaders\hb@xt@ .22em{\hss $\cdot$\hss}\hfill\kern\z@}
\newtheorem{theorem}{Theorem}
\newtheorem{assumption}{Assumption}
\newtheorem{lemma}[theorem]{Lemma}
\newtheorem{corollary}[theorem]{Corollary}
\newtheorem{definition}{Definition}
\newtheorem{claim*}{Claim}
\numberwithin{equation}{section}
\numberwithin{theorem}{section}
\numberwithin{example}{section}
\numberwithin{table}{section}
\numberwithin{figure}{section}
\newcommand{\assumpref}[1]{Assumption~\ref{assump:#1}}
\newcommand{\figref}[1]{Figure~\ref{fig:#1}}
\newcommand{\secref}[1]{Section~\ref{sec:#1}}
\newcommand{\appref}[1]{Appendix~\ref{app:#1}}
\newcommand{\defref}[1]{Definition~\ref{def:#1}}
\newcommand{\lemref}[1]{Lemma~\ref{lem:#1}}
\newcommand{\thmref}[1]{Theorem~\ref{thm:#1}}
\newcommand{\corref}[1]{Corollary~\ref{cor:#1}}
\newcommand{\FWER}{\text{FWER}}
\title{An adaptive procedure for detecting replicated signals with $k$-family-wise error rate control}
\author{Ninh Tran}
\keywords{$k$-FWER, multiple testing, partial conjunction, adaptive, filtering.}
\thanks{This research was supported by The University of Melbourne's Research Computing Services, the Petascale Campus Initiative, and the Australian Government Research Training Program (RTP) Scholarship.}
\subjclass[2020]{62F03}
\begin{document}

\maketitle

\begin{abstract}
   Partial conjunction (PC) hypothesis testing is widely used to assess the replicability of scientific findings across multiple comparable studies. In high-throughput meta-analyses, testing a large number of PC hypotheses with $k$-family-wise error rate ($k$-FWER) control often suffers from low statistical power due to the multiplicity burden. The state-of-the-art AdaFilter-Bon procedure by \citet[Ann. Stat., 50(4), 1890-1909]{Wang2022} alleviates this problem by filtering out hypotheses unlikely to be false before applying a rejection rule. However, a side effect of filtering is that it renders the rejection rule more stringent than necessary, leading to conservative $k$-FWER control. In this paper, we mitigate this conservativeness—and thereby improve the power of AdaFilter-Bon—by incorporating a post-filter null proportion estimate into the procedure. The resulting method, AdaFilter-AdaBon, has proven asymptotic $k$-FWER control under weak dependence and demonstrates empirical finite-sample control with higher power than the original AdaFilter-Bon in simulations.
\end{abstract}

\section{Introduction} 
Over the last two decades, the partial conjunction (PC) hypothesis testing \citep{Benjamini2008} has become a standard statistical tool for assessing replicability in meta-analyses involving $n \geq 2$ comparable studies. Within this framework, a scientifically interesting finding (i.e., a \textit{signal}) is deemed credible if it is exhibited in at least $u$ of the $n$ studies, a notion referred to as $u/n$ \textit{replication}. The value of $u \in \{ 2, \dots, n \}$ is freely chosen by the meta-analyst and is often guided by conventions within the field. For example, $u = 2$ is widely regarded as the gold standard in drug assessment \citep{Zhan2022}. 

Identifying $u/n$ replication requires testing the PC null hypothesis that \textit{fewer} than $u$ of the $n$ studies exhibit the signal. A $p$-value for testing this null hypothesis is typically obtained by combining the $p$-values from the $n$ individual studies using the general methodology of \citet{Benjamini2008}, which was later extended by \citet{Wang2019}. Testing many PC nulls with combined $p$-values under \textit{family-wise error rate} (FWER) control often suffers from low power due to the multiplicity correction. This issue is especially acute in high-throughput meta-analyses of gene expression, where the number of PC nulls can run into the hundreds or thousands. \citet{Wang2022} address this problem by proposing \textbf{AdaFilter-Bon}, a state-of-the-art FWER method that uses a $p$-value-\textbf{ada}ptive \textbf{filter} to remove PC nulls that are likely to be true, before applying a rejection rule similar to the \textbf{Bon}ferroni procedure.

In \secref{kFWER_methods}, we provide an overview of AdaFilter-Bon and show that, with a simple modification, AdaFilter-Bon can control the $k$-FWER—a generalization of the FWER\footnote{When $k = 1$, the $k$-FWER is equivalent to the FWER.}—for any integer $k \in \mathbb{N}$. While filtering reduces the multiplicity burden and increases power, it can also result in overly conservative $k$-FWER control. For a target $k$-FWER level $\alpha \in (0,1)$, we show in \secref{preview} that AdaFilter-Bon controls the $k$-FWER at $\alpha$ multiplied by the \textit{expected proportion of true PC nulls among those remaining after filtering}. This proportion, which lies in $[0,1]$, decreases as more true PC nulls are filtered out. Consequently, when the filter operates ideally—removing many PC nulls likely to be true—AdaFilter-Bon’s $k$-FWER level can drop well below $\alpha$.

In this paper, we introduce AdaFilter-\textbf{Ada}Bon, a multiple testing procedure based on AdaFilter-Bon that utilises a $p$-value-\textbf{ada}ptive estimator of the post-filter proportion of true PC nulls to offset the conservative $k$-FWER control induced by filtering. Under mild assumptions that allow for weak dependence among the data, this new procedure has theoretical asymptotic $k$-FWER control. In simulation studies, we show that AdaFilter-AdaBon maintains robust finite-sample $k$-FWER control and yields more true positive discoveries than the original AdaFilter-Bon.

To the best of our knowledge, this is the first work to employ an estimate of the post-filter proportion of true PC nulls for controlling the $k$-FWER. In particular, we were unable to find existing research addressing $k$-FWER control in replicability analyses, and found only limited efforts to estimate the post-filter proportion of true PC nulls \citep{Dickhaus2024, Tran2025, Bogolomov2018}. Even in single-study multiple testing, research on $k$-FWER control has been modest, with notable contributions from \citet{Lehmann2005, Romano2006, Sarkar2007, Sarkar2008}, none of which incorporate null proportion estimation.

The structure of the paper is as follows. In \secref{problem}, we present the problem formulation, overview the AdaFilter-Bon procedure, and outline the theory behind its conservative $k$-FWER control. In \secref{AdaFilterAdaBon}, we introduce the new method, AdaFilter-AdaBon, and establish its asymptotic $k$-FWER control guarantees. \secref{num} reports a simulation study comparing the performance of AdaFilter-AdaBon with AdaFilter-Bon and other related methods. Finally, \secref{discussion} concludes the paper by highlighting extensions of AdaFilter-AdaBon and open challenges for future work.

\section{Problem formulation}\label{sec:problem}
\subsection{Problem statement}\label{sec:problem_statement}
Suppose we are conducting a meta-analysis that consists of $n \geq 2$ comparable studies, each of which examines the same set of $m > 1$ features. For each feature $i \in \{ 1, \dots, m \}$ of study $j \in \{ 1, \dots, n \}$, suppose there is a null hypothesis $H_{ij}$ for which we observe a corresponding \textit{valid} $p$-value $P_{ij}$, i.e.,
\begin{equation*}
    \text{$\Pr(P_{ij} \leq t) \leq t$ for all $t \in [0,1]$ if $H_{ij}$ is true}.
\end{equation*}
Since a meta-analysis typically consists of independently conducted studies, we assume that the vectors $(P_{i1})^m_{i=1}$, $\dots$, $(P_{in})^m_{i=1}$ are independent.

For a given \textit{replicability level} $u \in \{ 2, \dots, n \}$, feature $i$ is said to be $u/n$ \textit{replicated} if at least $u$ of the $n$ hypotheses among $H_{i1},\dots,H_{in}$ are false. We consider a problem where the objective is to identify $u/n$ replicated features by testing the partial conjunction (PC) null hypothesis \citep{Benjamini2008}:
\begin{equation*}
    H^{u/n}_i : \text{ Less than $u$ null hypotheses among $H_{i1},\dots,H_{in}$ are false},
\end{equation*}
for $i = 1, \dots, m$. This multiple testing problem is often referred to as a \textit{replicability analysis}. Note that whenever $u'' \geq u'$, the following nesting property holds:
\begin{equation}\label{nest}
    \text{$H_i^{u'/n}$ is true} \implies \text{$H_i^{u''/n}$ is true}.
\end{equation}

Let $\mathcal{R} \subseteq \{1, \dots, m\}$ be the set of PC null hypotheses rejected by a multiple testing procedure. For any given tolerance level $k \in \mathbb{N}$, the $k$-family-wise error rate ($k$-FWER) of $\mathcal{R}$ is defined as
\begin{equation*}
    k{\text -}\FWER \equiv \Pr\left( \sum^m_{i=1} I\{ i \in \mathcal{R} \} I \{ \text{$H^{u/n}_i$ is true}  \} \geq k \right),
\end{equation*}
which is the probability that the number of false discoveries is greater than or equal to $k$. When $k = 1$, the $k$-FWER is simply referred to as the $\FWER$. The power of $\mathcal{R}$ can be evaluated using the true positive rate (TPR),
\begin{equation*}
    \text{TPR} \equiv \mathbb{E}\left[ \frac{\sum^m_{i=1} I \{ i \in \mathcal{R} \} I\{ \text{$H^{u/n}_i$ is false} \} }{1 \vee \sum^m_{i=1} I\{\text{$H^{u/n}_i$ is false} \} }  \right],
\end{equation*}
which is the expected proportion of true discoveries among the false PC nulls. Our goal is to develop a powerful multiple testing procedure for replicability analysis that operates on 
\begin{equation}\label{Ps}
  (P_{ij})_{m \times n} \equiv (P_{ij})_{(i,j) \in \{ 1 , \dots, m \} \times \{1, ,\dots, n \} },  
\end{equation}
and controls the $k$-FWER below a pre-specified target level $\alpha \in (0,1)$.
 
\subsection{$k$-FWER procedures for replicability analysis}\label{sec:kFWER_methods}
A valid $p$-value for testing $H^{u/n}_i$—referred to as a \textit{PC $p$-value} and denoted as $P^{u/n}_i$—can be constructed by \textit{combining} the $p$-values associated with feature $i$. That is,
\begin{equation*}
    P^{u/n}_i \equiv f(P_{i1}, P_{i2}, \dots, P_{in}; u),
\end{equation*}
where $f: [0,1]^n \to [0,1]$ is a combining function satisfying
\begin{equation*}
    \Pr ( f(P_{i1}, P_{i2}, \dots, P_{in}) \leq t ) \leq t \quad \text{ for all $t \in [0,1]$ if $H^{u/n}_i$ is true.} 
\end{equation*} 
\citet{Benjamini2008} proposed a general methodology for constructing a valid combined $p$-value for testing $H^{u/n}_i$, which \citet{Wang2019} later extended. An example of a $P^{u/n}_i$ from this methodology is the Bonferroni-combined PC $p$-value, 
\begin{equation*}
   f_{\text{Bon}}(P_{i1},\dots,P_{in}; u) = (n - u + 1) P_{i(u)},
\end{equation*}
where $P_{i(1)} \leq \cdots \leq P_{i(n)}$ are the order statistics of $P_{i1}, \dots, P_{in}$. Further examples and their numerical performance can be found in \citet{Hoang2022}.

Given the problem statement in \secref{problem_statement}, the $k$-FWER can be controlled by applying a multiple testing procedure on the PC $p$-values $(P^{u/n}_i)_{i=1}^m$. Arguably, the simplest $k$-FWER method is the (generalised) Bonferroni procedure introduced in \citet{Lehmann2005}:
\begin{definition}[Bonferroni]\label{def:Bon}
    Let $\alpha \in [0,1]$ be a $k$-FWER target for a given $k \in \mathbb{N}$, and $(P^{u/n}_i)^m_{i = 1}$ be PC $p$-values. For $i = 1, \dots, m$, reject $H^{u/n}_i$ if
    \begin{equation}\label{bon}
        P^{u/n}_i \leq k \cdot \frac{\alpha}{m}.
    \end{equation}
\end{definition}
Although less powerful than the subsequent methods developed by \citet{Romano2006, Sarkar2007, Sarkar2008}, the Bonferroni procedure provides a useful illustration of the multiplicity burden: as the number of PC nulls $m$ grows, the rejection threshold in \eqref{bon} becomes increasingly stringent.

The state-of-the-art AdaFilter-Bon procedure \citep{Wang2022} reduces the multiplicity burden by using a \textit{filter} to eliminate features unlikely to be $u/n$ replicated, before applying a rejection rule. Although originally developed for FWER control, we describe below a generalisation of AdaFilter-Bon that controls $k$-FWER control for any $k \in \mathbb{N}$\footnote{When $k = 1$, \defref{AdafilterBon} recovers the original AdaFilter-Bon procedure of \citet[Definition 3.1]{Wang2022}.}.
\begin{definition}[AdaFilter-Bon]\label{def:AdafilterBon}
     Let $\alpha \in [0,1]$ be a $k$-FWER target for a given $k \in \mathbb{N}$, and $u \in \{2, \dots, m\}$ be a replicability level. For $i = 1,\dots, m$, let
    \begin{equation}\label{S_i}
         S_i \equiv f_{\text{Bon}}(P_{i1},\dots,P_{in}; u)
    \end{equation}
    be a Bonferroni-combined PC $p$-value, and let
\begin{align}
    F_i \equiv \frac{n - u + 1}{n - u + 2} \cdot  f_{\text{Bon}}(P_{i1},\dots,P_{in}; u - 1) = (n - u + 1) P_{i(u-1)} \label{F_i}
\end{align}
be a corresponding \textit{filtering} $p$-value. For each $i$, reject $H^{u/n}_i$ if $S_i < \hat{t}$, where
    \begin{equation}\label{t_tilde}
        \hat{t} \equiv \sup \left \{ t \in [0, k  \alpha] : t \cdot 
        \sum^m_{i = 1} I \{ F_i < t \} \leq  k \alpha \right\}.
    \end{equation}
\end{definition}

Since $F_i \leq S_i$, it follows from \defref{AdafilterBon} that $H^{u/n}_i$ is filtered out (i.e. cannot be rejected) if $F_i \in [\hat{t},1]$. This is a sensible filtering approach, since $F_i$ is closely related to $f_{\text{Bon}}(P_{i1}, \dots, P_{in}; u-1)$, which is a valid PC $p$-value for testing $H_i^{(u-1)/n}$. Consequently, $F_i \in [\hat{t},1]$ indicates that $H_i^{(u-1)/n}$ is likely true, making feature $i$ unlikely to be $u/n$ replicated by the nesting property in \eqref{nest}.

We also observe that AdaFilter-Bon can only reject $H^{u/n}_i$ if
\begin{equation}\label{alt}
    S_i < \hat{t} \leq k \cdot \frac{\alpha}{\sum^m_{\ell = 1} I \{ F_{\ell} < \hat{t} \} }.
\end{equation}
The right-hand side of \eqref{alt} shows that $\hat{t}$ is inversely proportional to the number of features remaining after filtering, $\sum_{\ell=1}^m I\{F_{\ell} < \hat{t}\}$, rather than to $m$. Hence, in comparison to the Bonferroni procedure (\defref{Bon}), there is a reduction in the multiplicity burden due to filtering.

\subsection{Preview of contribution}\label{sec:preview}
By design, features not filtered out by AdaFilter-Bon, $\{ \ell \in \{1, \dots, m \} : F_{\ell} < \hat{t} \}$, are more likely to correspond to false PC nulls than those filtered out. Consequently, there is a tendency for the \textit{post-filter null proportion}, 
\begin{equation}\label{pi0}
        \pi_0(\hat{t}) \equiv \frac{\sum^m_{i = 1} I \{ F_i < \hat{t} \} I \{ \text{$H^{u/n}_i$ is true} \} }{\sum^m_{i = 1} I \{ F_i < \hat{t} \} } \in [0,1],
\end{equation}
to be small. This phenomenon, together with the theorem below, reveals that AdaFilter-Bon has a propensity to control the $k$-FWER conservatively:
\begin{theorem}[AdaFilter-Bon $k$-FWER control]\label{thm:AdaBon_FWER_control}
    If $(P_{ij})_{m \times n}$ is a collection of independent valid $p$-values, then AdaFilter-Bon (\defref{AdafilterBon}) has the following $ k\text{-FWER}$ property:
    \begin{equation}\label{AdafilterBonControl}
       k\text{-FWER}(\mathcal{R}) \leq \alpha \cdot \mathbb{E} \left[ \pi_0(\hat{t}) \right] \in [0,\alpha],
    \end{equation}
    where $\mathcal{R} = \{ i \in \{1 ,\dots, m \}: S_i \leq \hat{t} \}$ and $\pi_0(\hat{t})$ is as defined \eqref{pi0}.
\end{theorem}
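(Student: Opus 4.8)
The plan is to bound the $k$-FWER by the expected number of false discoveries via Markov's inequality, and then to control that expectation hypothesis-by-hypothesis. Write $H_0 = \{ i : H^{u/n}_i \text{ is true} \}$ and let $V = \sum_{i \in H_0} I\{ S_i \le \hat t \}$ be the number of false discoveries, so that, with $\mathcal R = \{ i : S_i \le \hat t \}$, we have $k\text{-FWER}(\mathcal R) = \Pr(V \ge k) \le \mathbb E[V]/k$. The target \eqref{AdafilterBonControl} therefore follows once I show $\mathbb E[V] \le k\alpha \, \mathbb E[\pi_0(\hat t)]$, since $\pi_0(\hat t) \in [0,1]$ forces the right-hand side into $[0,\alpha]$.

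The central structural observation is that the threshold $\hat t$ in \eqref{t_tilde} is a function of the filtering statistics $(F_1, \dots, F_m)$ alone, and so is $R(\hat t) := \sum_{\ell=1}^m I\{F_\ell < \hat t\}$. Since the features are mutually independent under the hypothesis of the theorem, I will condition on the whole vector $(F_1, \dots, F_m)$: this freezes $\hat t$ and $R(\hat t)$ into constants while leaving, for each fixed true null $i$, the conditional law of $S_i$ depending only on $F_i$ (the remaining features being independent of $S_i$ given $F_i$). The key probabilistic lemma I would then prove is a \emph{conditional super-uniformity} bound: for every $i \in H_0$ and every constant $\tau \in [0,1]$,
\[
   \Pr(S_i \le \tau \mid F_i) \le \tau \cdot I\{ F_i < \tau \} \quad \text{almost surely.}
\]
Because $F_i = (n-u+1)P_{i(u-1)} \le (n-u+1)P_{i(u)} = S_i$ from \eqref{F_i} and \eqref{S_i}, the event $\{S_i \le \tau\}$ already forces $F_i \le \tau$, so the indicator is essentially free; the substance is that, conditionally on the $(u-1)$-st order statistic, $P_{i(u)}$ is the minimum of the $n-u+1$ remaining coordinates, of which at least $n-u+1$ are valid null $p$-values when $H^{u/n}_i$ holds, and a union bound over these nulls yields the factor $\tau$.

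Granting this lemma, conditioning on $(F_1, \dots, F_m)$ gives, for each $i \in H_0$,
\[
   \Pr(S_i \le \hat t) = \mathbb E\bigl[ \Pr(S_i \le \hat t \mid F_1, \dots, F_m) \bigr] \le \mathbb E\bigl[ \hat t \, I\{ F_i < \hat t \} \bigr].
\]
I would then invoke the defining inequality of the supremum in \eqref{t_tilde}: since $t \mapsto t\,R(t)$ is non-decreasing and left-continuous, its supremum $\hat t$ over $\{t : t R(t) \le k\alpha\}$ satisfies $\hat t \, R(\hat t) \le k\alpha$, i.e. $\hat t \le k\alpha / R(\hat t)$ whenever $R(\hat t) > 0$. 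Substituting and summing over $i \in H_0$,
\[
   \mathbb E[V] = \sum_{i \in H_0} \Pr(S_i \le \hat t) \le k\alpha \, \mathbb E\biggl[ \frac{\sum_{i \in H_0} I\{ F_i < \hat t \}}{R(\hat t)} \biggr] = k\alpha \, \mathbb E[\pi_0(\hat t)],
\]
recognising \eqref{pi0}; the case $R(\hat t)=0$ is trivial since then $V=0$. Combining with the Markov bound completes the argument.

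The main obstacle is the conditional super-uniformity lemma together with the legitimacy of freezing the data-dependent threshold. Establishing $\Pr(S_i \le \tau \mid F_i) \le \tau \cdot I\{F_i<\tau\}$ requires care, because one is conditioning on the exact value of an order statistic, the null $p$-values need only be valid (super-uniform) rather than uniform, and up to $u-1$ of feature $i$'s coordinates may be non-null with arbitrary distributions; the worst configuration, in which all non-nulls collapse onto the smallest coordinates, is precisely what makes the union bound over the $n-u+1$ guaranteed nulls yield exactly the factor $\tau$. A secondary technical point is justifying that $\hat t$ depends on the data only through $(F_1,\dots,F_m)$, that the supremum in \eqref{t_tilde} is attained in the sense $\hat t\,R(\hat t)\le k\alpha$, and the boundary reconciliation between the strict rule $S_i<\hat t$ of \defref{AdafilterBon} and the non-strict $\mathcal R=\{i : S_i\le\hat t\}$ in the statement, both of which are harmless under continuity of the $p$-values.
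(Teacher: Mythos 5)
Your overall architecture (Markov's inequality applied to $\mathbb{E}[V]$, then a per-hypothesis bound that sums to $k\alpha\,\mathbb{E}[\pi_0(\hat t)]$) matches the paper's, but the step you route everything through contains a genuine gap: the conditional super-uniformity lemma $\Pr(S_i \le \tau \mid F_i) \le \tau\, I\{F_i < \tau\}$ a.s.\ is false. Counterexample: take $n=2$, $u=2$, so that $S_i = P_{i(2)}$ and $F_i = P_{i(1)}$; let $H_{i1}$ be true with $P_{i1}\sim \mathrm{Uniform}[0,1]$ and $H_{i2}$ be false with $P_{i2}=\epsilon$ deterministically (then $H^{2/2}_i$ is true and all $p$-values are valid, so the theorem's hypotheses hold). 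On the event $\{F_i<\epsilon\}=\{P_{i1}<\epsilon\}$, which has probability $\epsilon>0$, conditioning on the exact value of $F_i$ reveals $P_{i1}=F_i$ and hence $S_i=\epsilon$ exactly, so $\Pr(S_i\le\tau\mid F_i)=1>\tau$ for every $\tau\in[\epsilon,1)$. The flaw in your heuristic is the parenthetical claim that the $n-u+1$ coordinates above $P_{i(u-1)}$ are all valid nulls: under $H^{u/n}_i$ up to $u-1$ coordinates are non-null with arbitrary distributions, and nothing forces them to lie \emph{below} the $(u-1)$-st order statistic. Once you condition on the exact value of $F_i$, a non-null sitting above it can place $S_i$ wherever it likes. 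This is precisely why your plan of freezing $\hat t$ by conditioning on the whole vector $(F_1,\dots,F_m)$ cannot work: $\hat t$ is a function of $F_i$ itself, and exact-value conditioning on $F_i$ is too informative about $S_i$.

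What is true---and all the paper needs---is the event-type statement of \lemref{conditional_validity}: $\Pr(S_i<t'\mid F_i<t'')\le t'$ for constants $t'\le t''$, which conditions on the event $\{F_i<t''\}$ rather than on the value of $F_i$. To make the threshold constant from feature $i$'s viewpoint, the paper introduces the leave-one-out threshold $\hat t_i$ of \eqref{t_for_proof}, computed without feature $i$'s data and hence independent of $(S_i,F_i)$, and proves $\hat t_i\le\hat t$ always (\lemref{t_hat_leq}) and $\hat t_i=\hat t$ on the event $\{F_i<\hat t\}$ (\lemref{gamma_i_equal_t_hat}). Conditioning on $\hat t_i$ and on $\{F_i<\hat t_i\}$ then yields exactly the per-hypothesis bound $\hat t_i\, I\{F_i<\hat t_i\}$ that you wanted, and the two lemmas convert the resulting expression back into $k\alpha$ times the summand of $\pi_0(\hat t)$. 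Your remaining steps---the observation that $\hat t\,R(\hat t)\le k\alpha$ by left-continuity, the summation yielding $\mathbb{E}[\pi_0(\hat t)]$, and Markov's inequality with the factor $k$---are sound, so your proof can be repaired by replacing the exact-value conditioning scheme with this decoupling device.
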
 
\thmref{AdaBon_FWER_control} highlights that AdaFilter-Bon’s $k$-FWER level can fall well below $\alpha$ when $\hat{\pi}_0(\hat{t})$ is close to zero, i.e., when the filter removes many PC nulls that are unlikely to be false. Hence, \textit{filtering effectively comes at the cost of more conservative $k$-FWER control}. The proof of \thmref{AdaBon_FWER_control} is provided in \appref{AdaBon_FWER_control}; for $k = 1$, the $k$-FWER upper bound presented in \eqref{AdafilterBonControl} is tighter than the one presented in Theorem 4.2 of \cite{Wang2022}, who simply only proved that AdaFilter-Bon controls the FWER at level $\alpha$.

The thrust of this paper is to estimate $\pi_0(t)$ for $t \in [0,1]$ and incorporate it into AdaFilter-Bon’s rejection rule to bring its $k$-FWER control level closer to $\alpha$, thereby yielding a more powerful multiple testing procedure. Specifically, we replace the AdaFilter-Bon rejection threshold defined in \eqref{t_tilde} with
\begin{equation}\label{rej_thresh}
    \hat{t}_{\theta} \equiv \sup \left \{ t \in [0,1] : \hat{\pi}_0(t) \cdot t \cdot 
        \sum^m_{i = 1} I \{ F_i < t \} \leq k \alpha  \right\}
\end{equation}
instead, where $\hat{\pi}_0(t) \geq 0$ is a data-adaptive estimate of $\pi_0(t)$. Heuristically, we expect the threshold in \eqref{rej_thresh} to yield rejections with a $k$-FWER level bounded above by
\begin{equation*}
    \alpha \cdot \mathbb{E} \left[ \frac{\pi_0(\hat{t}_{\theta})}{\hat{\pi}_0(\hat{t}_{\theta})} \right], 
\end{equation*}
rather than the right-hand side of \eqref{AdafilterBonControl}. Thus, to ensure $k$-FWER control at level $\alpha$, $\hat{\pi}_0(\hat{t}_{\theta})$ should ideally satisfy $\hat{\pi}_0(\hat{t}_{\theta}) \approx \pi_0(\hat{t}_{\theta})$, or at least $\hat{\pi}_0(\hat{t}_{\theta}) \gtrsim \pi_0(\hat{t}_{\theta})$. In light of the data-\textbf{ada}ptive post-filter null proportion estimator used in forming \eqref{rej_thresh}, we refer to this new multiple testing procedure as AdaFilter-\textbf{Ada}Bon.

\section{AdaFilter-AdaBon}\label{sec:AdaFilterAdaBon}
\subsection{AdaFilter-AdaBon methodology}
Let $\theta \in (0,1)$ be a fixed tuning parameter. By the validity of Bonferroni-combined PC $p$-values, it holds that $\Pr(S_i > \theta t) \geq 1 - \theta t$ for all $t \in [0,1]$ if $H^{u/n}_i$ is true. Hence,  
\begin{align}
     \frac{\sum^m_{i = 1} I \{ F_i < t \} I \{ S_i \geq \theta t \}}{\sum^m_{i = 1} I \{ F_i < t \} } &\geq  \frac{\sum^m_{i = 1} I \{ F_i < t \} I \{ \text{$H^{u/n}_i$ is true} \} I \{ S_i \geq \theta t \}}{\sum^m_{i = 1} I \{ F_i < t \} }  \nonumber \\
     & \gtrsim  (1 - \theta t) \frac{\sum^m_{i = 1} I \{ F_i < t \} I\{ \text{$H^{u/n}_i$ is true} \}}{\sum^m_{i = 1} I \{ F_i < t \} }. \nonumber
\end{align}
Rearranging the above display yields our estimator for $\pi_0(t)$:
\begin{equation}\label{pi_hat}
    \hat{\pi}_0(t) \equiv \hat{\pi}_0(t;\theta) = \frac{\sum^m_{i = 1} I\{ F_i < t \} I\{ S_i \geq \theta t \}}{(1 - \theta t) \sum^m_{i = 1} I \{ F_i < t \} }.
\end{equation}
As the value of $\theta$ increases while $t$ remains constant, the bias of $\hat{\pi}_0(t)$ decreases and its variance increases. In our experiments, setting $\theta = 0.5$ generally strikes a good balance in this bias–variance trade-off. Having defined $\hat{\pi}_0(t)$, we now formally present the AdaFilter-AdaBon procedure below:
\begin{definition}[AdaFilter-AdaBon]\label{def:AdafilterBonpi}
    Let $\alpha \in [0,1]$ be a $k$-FWER target for a given $k \in \mathbb{N}$, $u \in \{2, \dots, m\}$ be a replicability level, $\theta \in (0,1)$ be a tuning parameter, and $S_i$ and $F_i$ be as defined in \eqref{S_i} and \eqref{F_i}, respectively. For $i=1, \dots, m$, reject $H^{u/n}_i$ if $S_i < \hat{t}_{\theta}$, where
    \begin{align}
        \hat{t}_{\theta} \equiv \sup \left \{ t \in [0,1] : t \cdot 
        \frac{\sum^m_{i = 1} I \{ F_i < t \} I\{ S_i \geq \theta t \}}{1 - \theta t} \leq k \alpha \right\}. \label{t_hat}
    \end{align}
\end{definition}

Note that \eqref{t_hat} is equivalent to the threshold previewed in \eqref{rej_thresh} of \secref{preview} when $\hat{\pi}_0(t)$ is defined as \eqref{pi_hat}. To simplify the implementation of AdaFilter-AdaBon in practice, we recommend defining
\begin{equation*}
    \mathcal{G} = \left\{ G \in \{ 0,1 \} \cup \{ F_i, S_i, S_i/\theta \}^m_{i=1} : G \leq 1 \right\}
\end{equation*}
and using the rejection threshold
\begin{equation}\label{t_tilde_theta}
    \breve{t}_{\theta} \equiv \max \left\{  t \in  \mathcal{G}: t \cdot 
        \frac{\sum^m_{i = 1} I \{ F_i < t \} I\{ S_i \geq \theta t \}}{1 - \theta t} \leq k \alpha \right\}
\end{equation}
as a surrogate for $\hat{t}_{\theta}$. The exact value of $\breve{t}_{\theta}$ can be computed in $O(m)$ time by evaluating the set-building condition in \eqref{t_tilde_theta} for each element of $\mathcal{G}$. Although $\breve{t}_{\theta} \not\equiv \hat{t}_{\theta}$, the theorem below states that $\breve{t}_{\theta}$ and $\hat{t}_{\theta}$ yield the same rejection set when thresholding the PC $p$-values in $(S_i)^m_{i=1}$. 

\begin{theorem}[$\breve{t}_{\theta}$ as a surrogate for $\hat{t}_{\theta}$]\label{thm:rej_eq}
Let $\hat{t}_{\theta}$ and $\breve{t}_{\theta}$ be as defined in \eqref{t_hat} and \eqref{t_tilde_theta}, respectively. It holds that
\begin{equation*}
    \left\{ i \in \{ 1, \dots, m \} : S_i < \hat{t}_{\theta}  \right\} = \left\{ i \in \{ 1, \dots, m \} : S_i < \breve{t}_{\theta}  \right\}.
\end{equation*}
\end{theorem}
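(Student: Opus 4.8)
The plan is to read both thresholds off a single feasibility function
\[
g(t)\equiv t\cdot\frac{\sum_{i=1}^m I\{F_i<t\}\,I\{S_i\ge\theta t\}}{1-\theta t},\qquad t\in[0,1],
\]
so that $\hat t_\theta=\sup\{t\in[0,1]:g(t)\le k\alpha\}$ from \eqref{t_hat} and $\breve t_\theta=\max\{t\in\mathcal G:g(t)\le k\alpha\}$ from \eqref{t_tilde_theta}. Because $\mathcal G\subseteq[0,1]$, every feasible grid point is a feasible point, so $\breve t_\theta\le\hat t_\theta$ and hence $\{i:S_i<\breve t_\theta\}\subseteq\{i:S_i<\hat t_\theta\}$. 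The statement therefore reduces to a single separation claim: no realized statistic $S_i$ falls in the gap between the two thresholds, i.e.\ there is no $i$ with $\breve t_\theta\le S_i<\hat t_\theta$. Establishing this is the whole content of the proof.

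To analyze $g$ I would first rewrite the numerator. Since $\theta>0$, $I\{S_i\ge\theta t\}=I\{t\le S_i/\theta\}$, so
\[
N(t)\equiv\sum_{i=1}^m I\{F_i<t\}\,I\{S_i\ge\theta t\}=\sum_{i=1}^m I\{F_i<t\le S_i/\theta\},
\]
a sum of indicators of half-open intervals. Thus $N$ is a left-continuous step function whose only jump points are the $F_i$ (upward) and the $S_i/\theta$ (downward), all of which lie in $\mathcal G$. The prefactor $t\mapsto t/(1-\theta t)$ is continuous and strictly increasing on $[0,1]$ (its derivative is $(1-\theta t)^{-2}>0$), so on each maximal interval between consecutive jump points of $N$ the product $g$ is either identically $0$ or continuous and strictly increasing. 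Left-continuity of $g$ also shows the supremum defining $\hat t_\theta$ is attained: taking feasible $t_n\uparrow\hat t_\theta$ gives $g(\hat t_\theta)=\lim_n g(t_n)\le k\alpha$.

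The crux is to locate $\hat t_\theta$ within its constancy interval $(a,b]$, where $a,b\in\mathcal G$. If $\hat t_\theta=b$, then $\hat t_\theta$ is itself a grid point, so $\breve t_\theta=\hat t_\theta$ and there is nothing to prove. Otherwise $\hat t_\theta\in(a,b)$, the constant $N$-value on $(a,b]$ is positive, and strict monotonicity forces $g(\hat t_\theta)=k\alpha$ with $g<k\alpha$ throughout $(a,\hat t_\theta)$. In this case every $S_i$ lying in $(a,\hat t_\theta)$ is a feasible grid point—because $S_i\in\mathcal G$ and $g(S_i)<g(\hat t_\theta)=k\alpha$—and is therefore dominated by $\breve t_\theta$; meanwhile no $S_i\ge\hat t_\theta$ contributes to either rejection set. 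Combining the two cases pins $\breve t_\theta$ to the largest grid point not exceeding $\hat t_\theta$, from which the separation claim, and hence the equality of rejection sets, follows.

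The main obstacle I anticipate is the boundary bookkeeping in the interior case, specifically the tie $S_i=\breve t_\theta$: one must verify that the largest realized statistic below $\hat t_\theta$ is exactly captured by $\breve t_\theta$, so that thresholding at $\breve t_\theta$ and at $\hat t_\theta$ rejects the same features. This is precisely the step that relies on the raw values $S_i$ (and not merely the filtering values $F_i$ and the critical values $S_i/\theta$) being included in $\mathcal G$; without them $\breve t_\theta$ could fall strictly below the relevant $S_i$ and alter the rejection set. The remaining details—degenerate intervals where $N\equiv0$, coincidences among equal $F_i$ or $S_i$, and the endpoint cases $\hat t_\theta\in\{0,1\}$—are routine once the piecewise-monotone structure above is in hand.
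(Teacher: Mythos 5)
Your route is essentially the paper's (order $\mathcal{G}$, use the step-function structure of the count $N(t)=\sum_i I\{F_i<t\le S_i/\theta\}$, the monotone prefactor $t/(1-\theta t)$, and the sandwich $\breve{t}_{\theta}\le G^\ast\le\hat{t}_{\theta}$), but the step you yourself flagged as the ``main obstacle'' is not routine bookkeeping---it is where the argument collapses, and in fact where the statement itself fails. Two holes. First, the tie: inclusion of the $S_i$ in $\mathcal{G}$ only buys you $\breve{t}_{\theta}\ge S_i$ for each feasible $S_i<\hat{t}_{\theta}$; when the largest such $S_i$ is the maximal feasible grid point you get $\breve{t}_{\theta}=S_i$ exactly, and the strict rule $S_i<\breve{t}_{\theta}$ then fails to reject feature $i$ even though $S_i<\hat{t}_{\theta}$. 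Concretely, take $m=2$, $\theta=1/2$, $k\alpha=1$ (e.g.\ $k=20$, $\alpha=0.05$), $F_1=0.1$, $S_1=0.6$, $F_2=S_2=0.99$ (realizable with $n=u=2$). On $(0.1,0.99]$ the count is $1$ (note $S_1/\theta=1.2>1$), so feasibility reads $t/(1-t/2)\le 1$, i.e.\ $t\le 2/3$; hence $\hat{t}_{\theta}=2/3$ and the $\hat{t}_{\theta}$-rule rejects feature $1$. But $\mathcal{G}=\{0,0.1,0.6,0.99,1\}$ has feasible elements only $\{0,0.1,0.6\}$ (since $0.99/(1-0.495)\approx 1.96>1$), so $\breve{t}_{\theta}=0.6=S_1$ and the surrogate rejects nothing. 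Every intermediate claim in your write-up holds in this instance; only the final ``separation claim follows'' is false. Second, your pinning of $\breve{t}_{\theta}$ to the largest grid point not exceeding $\hat{t}_{\theta}$ is unjustified when that point is the left endpoint $a$ of the constancy interval: the constraint value at $a$ is governed by the \emph{preceding} interval, where the count may be far larger, so $a$ can be infeasible and $\breve{t}_{\theta}$ can sit far below it, stranding many $S_i\in[\breve{t}_{\theta},a]$. For instance, with $\theta=1/2$ and $k\alpha=0.5$ (i.e.\ $k=10$, $\alpha=0.05$, a setting the paper simulates), eleven features with $(F_i,S_i)=(0.001,0.15)$ plus one with $(0.001,0.5)$ give $\hat{t}_{\theta}=0.4$ (rejecting all eleven) but $\breve{t}_{\theta}=0.001$ (rejecting none).

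You should know that the paper's own proof commits the same error in different clothing: it asserts that $f(t)=\sum_i I\{F_i<t\}I\{S_i\ge\theta t\}$ and $g(t)=\{i:S_i<t\}$ are constant on the intervals $[G_\ell,G_{\ell+1})$, whereas each indicator is left-continuous and changes value just \emph{after} $F_i$, $S_i$, or $S_i/\theta$, so both functions are constant on $(G_\ell,G_{\ell+1}]$ instead; with the correct orientation the paper's identities \eqref{eq_res}--\eqref{rej_eq} and its feasibility computation for $G_{\hat{\ell}_{\theta}}$ break down, and indeed the two examples above defeat the published \thmref{rej_eq}, not merely your write-up (in the twelve-feature example, $f(\hat{t}_{\theta})=1\neq 12=f(G_{\hat{\ell}_{\theta}})$). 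So the gap in your proposal cannot be closed by better boundary bookkeeping: the statement is false as written, and a repair must change the procedure rather than the proof---for example, computing $\hat{t}_{\theta}$ exactly (on each maximal constancy interval $(a,b]$ with count $\nu\ge 1$ the constraint is $\nu t/(1-\theta t)\le k\alpha$, solvable in closed form, so $\hat{t}_{\theta}$ is computable in $O(m\log m)$ after sorting), or equivalently augmenting $\mathcal{G}$ with the interval-wise crossing points $k\alpha/(\nu+\theta k\alpha)$.
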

The proof of \thmref{rej_eq}, given in \appref{rej_eq}, is based on the fact that $\sum^m_{i = 1} I \{ F_i < t \} I\{ S_i \geq \theta t \}$ and $\left\{ i \in \{ 1, \dots, m \} : S_i < t  \right\}$ are both step functions in $t$.

\subsection{Asymptotic results for AdaFilter-AdaBon}\label{sec:asymp}

In this subsection, we provide the result showing that AdaFilter-AdaBon (\defref{AdafilterBonpi}) asymptotically controls the $k$-FWER at level $\alpha$ as $m \to \infty$. The following assumption underpins our asymptotic result:
\begin{assumption}\label{assump:asymp}
    For any given $m$, let
    \begin{equation*}
        m_0 \equiv \sum^m_{i = 1} I \{ \text{$H^{u/n}_i$ is true} \} \quad \text{and} \quad m_1 \equiv \sum^m_{i = 1} I \{ \text{$H^{u/n}_i$ is false} \}
    \end{equation*}
    be the number of true and false PC nulls, respectively. The following limits exist almost surely for each $t \in (0,t]$:
    \begin{equation}\label{FS_0}
        \lim_{m \rightarrow \infty} \frac{\sum^m_{i = 1} I \{ S_i < t \} I \{ \text{$H^{u/n}_i$ is true} \}}{m_0}  = \tilde{S}_0(t), \quad \lim_{m \rightarrow \infty} \frac{\sum^m_{i = 1} I \{ F_i < t \} I \{ \text{$H^{u/n}_i$ is true} \}}{m_0}  = \tilde{F}_0(t)  
    \end{equation}
    \begin{equation}\label{FS_1}
        \lim_{m \rightarrow \infty} \frac{\sum^m_{i = 1} I \{ S_i < t \} I \{ \text{$H^{u/n}_i$ is false} \}}{m_1} = \tilde{S}_1(t), \quad  \lim_{m \rightarrow \infty} \frac{\sum^m_{i = 1} I \{ F_i < t \} I \{ \text{$H^{u/n}_i$ is false} \}}{m_1}  = \tilde{F}_1(t),
    \end{equation}    
    where $\tilde{S}_0, \tilde{F}_0,\tilde{S}_1$, and $\tilde{F}_1$ are continuous function. For any $t',t'' \in (0,1]$ where $t' \leq t''$, it holds that
    \begin{equation}\label{SF}
        0 < \tilde{S}_0(t') \leq t' \tilde{F}_0(t''). 
    \end{equation}
    Moreover, the following limit also exists:
    \begin{equation}\label{pi0_limit}
        \lim_{m \rightarrow \infty} \frac{m_0}{m} = \pi_0 \in (0,1).
    \end{equation}
\end{assumption}

The convergence properties stated in \eqref{FS_0}, \eqref{FS_1}, and \eqref{pi0_limit} are typical in the asymptotic analysis literature; for example, see \citet[Sec 2.2]{Storey2004}, \citet[Sec 4.2]{Wang2022}, and \citet[Sec 3.1]{Zhang2020}. Any type of dependence where \eqref{FS_0} and \eqref{FS_1} can hold is what we consider weak dependence. For instance, dependence in finite-dimensional blocks, autoregressive dependence, and mixing distributions are candidates for weak dependence. The properties relating to $\tilde{S}_0$ and $\tilde{F}_0$ in \eqref{SF} are not standard in the asymptotic analysis literature but arise naturally from the \textit{conditionally validity} of Bonferroni-combined PC $p$-values:
\begin{lemma}[Conditional validity]\label{lem:conditional_validity}
    Suppose $H^{u/n}_i$ is true and the $p$-values $P_{i1},\dots,P_{in}$ are all valid and mutually independent. Then for any fixed $t', t'' \in [0,1]$ where $t' \leq t''$, we have that
    \begin{equation}\label{VC}
        \Pr(S_i < t' | F_i < t'')= \frac{\Pr(S_i < t', F_i < t'')}{\Pr(F_i < t'')} = \frac{\Pr(S_i < t')}{\Pr(F_i < t'')}   \leq t'
    \end{equation}
    if $\Pr(F_i < t') > 0$. 
\end{lemma}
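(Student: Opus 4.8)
The plan is to dispatch the two displayed equalities by a deterministic set inclusion and to concentrate all the work on the final inequality $\Pr(S_i < t') \le t'\,\Pr(F_i < t'')$. Since $S_i = (n-u+1)P_{i(u)}$ and $F_i = (n-u+1)P_{i(u-1)}$ with $P_{i(u-1)} \le P_{i(u)}$, we have $F_i \le S_i$; hence on $\{S_i < t'\}$ we have $F_i \le S_i < t' \le t''$, so $\{S_i < t'\} \subseteq \{F_i < t''\}$. This inclusion gives $\Pr(S_i < t', F_i < t'') = \Pr(S_i < t')$ (the second equality), while the first is the definition of conditional probability, which is legitimate because $\{F_i < t'\} \subseteq \{F_i < t''\}$ forces $\Pr(F_i < t'') \ge \Pr(F_i < t') > 0$.

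For the inequality I would translate both sides into statements about the count $N \equiv \#\{j : P_{ij} < c'\}$, where $c' \equiv t'/(n-u+1)$: because $P_{i(u)} < c'$ iff at least $u$ of the $P_{ij}$ fall below $c'$, we have $\{S_i < t'\} = \{N \ge u\}$ and likewise $\{F_i < t'\} = \{N \ge u-1\}$. It then suffices to prove $\Pr(N \ge u) \le t'\,\Pr(N \ge u-1)$, after which the monotonicity $\Pr(N \ge u-1) = \Pr(F_i < t') \le \Pr(F_i < t'')$ (valid since $t' \le t''$) closes the argument. The key device is a first-moment (Markov-type) bound together with a leave-one-out independence step: writing $X_j = I\{P_{ij} < c'\}$, $N = \sum_{j=1}^n X_j$, and $N^{(-j)} \equiv N - X_j$, and using that conditioning on $X_j = 1$ turns $\{N \ge u\}$ into $\{N^{(-j)} \ge u-1\}$,
\begin{equation*}
u\,\Pr(N \ge u) \;\le\; \mathbb{E}\bigl[N\,I\{N \ge u\}\bigr] \;=\; \sum_{j=1}^{n} \Pr(X_j = 1)\,\Pr\bigl(N^{(-j)} \ge u-1\bigr) \;\le\; n c'\,\Pr(N \ge u-1),
\end{equation*}
where the middle equality uses independence of $X_j$ from $N^{(-j)}$, and the last step uses $\Pr(X_j = 1) = \Pr(P_{ij} < c') \le c'$ (validity of all $n$ p-values) together with $N^{(-j)} \le N$. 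Dividing by $u$ and substituting $c' = t'/(n-u+1)$ reduces everything to the elementary inequality $\tfrac{n}{u} \le n-u+1$, i.e.\ $(u-1)(n-u) \ge 0$, which holds for every $u \in \{2,\dots,n\}$.

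I expect the main obstacle to be producing the factor $\Pr(F_i < t'')$ on the right-hand side with the correct constant, rather than merely the validity bound $\Pr(S_i < t') \le t'$: that bound alone only yields $\Pr(S_i < t')/\Pr(F_i < t'') \le t'/\Pr(F_i < t'')$, which is weaker than the claim and can even exceed $1$ when the denominator is small. A naive union bound over the $n$ p-values gives the loose constant $n$ and, more importantly, never relates $\{N \ge u\}$ back to $\{N \ge u-1\}$; the first-moment identity above is precisely what converts ``at least $u$ successes'' into ``at least $u-1$ successes times one extra success of probability $\le c'$,'' thereby supplying the missing $\Pr(N \ge u-1)$ factor. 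Some care is also needed with the threshold bookkeeping, since the core inequality is established at the common threshold $c'$ and only afterwards relaxed to $t''$ via monotonicity, so the ordering $t' \le t''$ must be used at the right moment.
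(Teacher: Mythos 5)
Your handling of the two equalities is correct and coincides with the paper's own reduction (the paper likewise uses $F_i \leq S_i$ and the monotonicity $\Pr(F_i < t') \leq \Pr(F_i < t'')$, and then closes the core inequality by citing Lemma 4.1 of \citet{Wang2022} rather than proving it). The genuine gap is in your proof of that core inequality: the step $\Pr(X_j = 1) = \Pr(P_{ij} < c') \leq c'$, invoked for \emph{all} $n$ coordinates. In this paper a $p$-value is valid only conditionally on its own null, i.e.\ $\Pr(P_{ij} \leq t) \leq t$ \emph{if $H_{ij}$ is true}, and the lemma's hypothesis is that the PC null $H^{u/n}_i$ is true, which permits up to $u-1$ of the individual nulls $H_{ij}$ to be false; for those $j$ the law of $P_{ij}$ is unconstrained (e.g.\ $P_{ij} \equiv 0$, so $\Pr(X_j = 1) = 1$). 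This is not merely a citation problem in your justification—the intermediate bound itself fails in that regime. Take $n = 3$, $u = 2$, $P_{i1} \equiv 0$, and $P_{i2}, P_{i3}$ i.i.d.\ uniform, so $X_1 \equiv 1$ and $X_2, X_3$ are Bernoulli$(c')$. Then
\begin{equation*}
\sum_{j=1}^{3} \Pr(X_j = 1)\,\Pr\bigl(N^{(-j)} \geq 1\bigr) = (2c' - c'^2) + c' + c' = 4c' - c'^2 \; > \; 3c' = n c' \,\Pr(N \geq 1)
\end{equation*}
for every $c' \in (0,1)$, and the conclusion $\Pr(N \geq u) \leq (n/u)\, c' \,\Pr(N \geq u-1)$ of your chain also fails numerically: $\Pr(N \geq 2) = 2c' - c'^2 > \tfrac{3}{2}c'$ whenever $c' < 1/2$ (e.g.\ $0.19$ versus $0.15$ at $c' = 0.1$). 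So, as written, your argument proves the lemma only in the sub-case where all $n$ individual nulls of feature $i$ are true, whereas the lemma is needed—and is used in the paper, e.g.\ to motivate \eqref{SF}—precisely for true PC nulls that carry up to $u-1$ signals.

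The argument is salvageable by conditioning on the non-null coordinates instead of treating all $n$ symmetrically. Let $A = \{ j : H_{ij} \text{ is false} \}$ with $a = |A| \leq u - 1$, and set $B = \sum_{j \in A} X_j$ and $M = \sum_{j \notin A} X_j$, which are independent by the mutual independence assumption. Applying your first-moment identity to $M$ alone—where $\Pr(X_j = 1) \leq c'$ \emph{is} legitimate—gives, for each $b \in \{0, \dots, a\}$, $(u - b)\,\Pr(M \geq u - b) \leq (n - a)\, c'\, \Pr(M \geq u - 1 - b)$, and the elementary bound $(n-a)/(u-b) \leq (n-b)/(u-b) \leq n - u + 1$ for $b \leq a \leq u-1$ (equivalent to $(u - 1 - b)(n - u) \geq 0$) yields $\Pr(M \geq u - b) \leq t' \, \Pr(M \geq u - 1 - b)$; averaging against $\Pr(B = b)$ recovers $\Pr(N \geq u) \leq t' \, \Pr(N \geq u - 1)$ and hence the lemma. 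Note that $b = u - 1$ gives equality in the ratio bound, which is exactly why the worst case (all $u-1$ permitted signals with zero $p$-values) pins the constant at $n - u + 1$ and why no symmetric treatment of the $n$ coordinates can produce it. This conditional-on-signals argument is essentially the content of Lemma 4.1 of \citet{Wang2022}, which the paper's proof invokes directly after the same monotonicity reductions you carried out.
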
 
Rearranging \eqref{VC} gives $\Pr(S_i < t') \leq t' \cdot \Pr(F_i < t'')$ when $H^{u/n}_i$ is true, which is analogous to the property stated in \eqref{SF} when \eqref{FS_0} and \eqref{pi0_limit} hold. The proof for \lemref{conditional_validity} is provided in \appref{conditional_validity}. 

Below, we present our asymptotic result, which states that if $k$ remains a constant proportion of $m$ as $m \to \infty$, then AdaFilter-AdaBon asymptotically controls the $k$-FWER below $\alpha$.

\begin{theorem}[AdaFilter-AdaBon $k$-FWER control]\label{thm:AdaBon_kFWER_control}
    Suppose \assumpref{asymp} holds and
    \begin{equation*}
         k \equiv k(m) = \omega \cdot m
    \end{equation*}
    for a constant $\omega \in (0,1)$. If there exists a threshold $t \in (0,1]$ such that
    \begin{equation}\label{lim_cond}
        \lim_{m \to \infty} t \cdot \frac{\sum^m_{i=1} I \{ F_i < t \} I\{ S_i \geq \theta t \} }{m (1 - \theta t)} < \omega \alpha
    \end{equation}
    with probability $1$, then AdaFilter-AdaBon (\defref{AdafilterBonpi}) has the following asymptotic $k$-FWER property:
    \begin{equation*}
        \limsup_{m \to \infty} \left(k\text{-FWER}(\mathcal{R})\right) \leq \alpha,
    \end{equation*}
    where $\mathcal{R} = \{ i \in \{ 1, \dots, m \} : S_i \leq \hat{t}_{\theta} \}$.
\end{theorem}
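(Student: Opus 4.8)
The plan is to bound the number of false discoveries $V_m \equiv \sum_{i=1}^m I\{S_i \le \hat t_\theta\}\,I\{H_i^{u/n}\text{ true}\}$ directly, showing that $V_m/m \le \omega\alpha + o(1)$ almost surely. Since $\alpha < 1$, this forces $V_m < \omega m = k$ for all large $m$, so $I\{V_m \ge k\}\to 0$ a.s.\ and bounded convergence yields $\limsup_m k\text{-FWER}(\mathcal{R}) = \lim_m \Pr(V_m \ge k) = 0 \le \alpha$. Thus the entire proof reduces to controlling $V_m/m$ as it is evaluated at the random, data-dependent threshold $\hat t_\theta$.

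First I would set up the scaled empirical quantities $g_m(t)\equiv m^{-1}\sum_{i:\text{true}} I\{S_i\le t\}$ and $Q_m(t)\equiv \tfrac{t}{1-\theta t}\,m^{-1}\sum_{i=1}^m I\{F_i<t\}I\{S_i\ge\theta t\}$, so that $\hat t_\theta=\sup\{t\in[0,1]:Q_m(t)\le k\alpha/m\}=\sup\{t:Q_m(t)\le\omega\alpha\}$ and $V_m/m=g_m(\hat t_\theta)$. The key algebraic observation is that, writing $I\{S_i\ge\theta t\}=1-I\{S_i<\theta t\}$ and using $F_i\le S_i$ with $\theta\in(0,1)$, one has $I\{F_i<t\}I\{S_i<\theta t\}=I\{S_i<\theta t\}$; splitting the sum into true and false nulls, dividing by $m_0$ and $m_1$, and invoking \eqref{FS_0}, \eqref{FS_1}, \eqref{pi0_limit} gives the pointwise a.s.\ limit
\[
Q_\infty(t) = \frac{t}{1-\theta t}\Big\{\pi_0\big[\tilde F_0(t) - \tilde S_0(\theta t)\big] + (1-\pi_0)\big[\tilde F_1(t) - \tilde S_1(\theta t)\big]\Big\}.
\]

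Next I would prove the central inequality $Q_\infty(t)\ge \pi_0\tilde S_0(t)$ for all $t\in(0,1]$. The false-null bracket is nonnegative because $F_i\le S_i$ and $\theta t\le t$ force $\tilde F_1(t)\ge\tilde S_1(\theta t)$. For the true-null bracket, \eqref{SF} with $(t',t'')=(\theta t,t)$ gives $\tilde S_0(\theta t)\le \theta t\,\tilde F_0(t)$, so $\tilde F_0(t)-\tilde S_0(\theta t)\ge(1-\theta t)\tilde F_0(t)$; cancelling $1-\theta t$ and applying \eqref{SF} again with $t'=t''=t$ to get $t\tilde F_0(t)\ge\tilde S_0(t)$, the true-null term alone is at least $\pi_0\tilde S_0(t)$. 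Because $g_m(t)\to\pi_0\tilde S_0(t)$ a.s.\ by \eqref{FS_0} and \eqref{pi0_limit}, this is precisely the pointwise statement $\lim_m g_m(t)\le\lim_m Q_m(t)$.

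The main obstacle is that $\hat t_\theta$ is random, so pointwise convergence must be upgraded to uniform convergence. Each of $g_m$, $m^{-1}\sum_{i:\text{true}}I\{S_i<\theta t\}$, $m^{-1}\sum_{i:\text{true}}I\{F_i<t\}$ and their false-null analogues is monotone in $t$ with a continuous a.s.\ limit by \assumpref{asymp}; by P\'olya's theorem (pointwise convergence of monotone functions to a continuous limit on the compact $[0,1]$ is uniform) and the boundedness of $t/(1-\theta t)$ there (since $1-\theta t\ge 1-\theta>0$), both $\sup_t|g_m(t)-\pi_0\tilde S_0(t)|$ and $\sup_t|Q_m(t)-Q_\infty(t)|$ tend to $0$ a.s. I would then evaluate at $\hat t_\theta$: choosing $t_j\uparrow\hat t_\theta$ with $Q_m(t_j)\le\omega\alpha$ and using continuity of $Q_\infty$ gives $Q_\infty(\hat t_\theta)\le\omega\alpha+\sup_t|Q_m-Q_\infty|$, whence $V_m/m=g_m(\hat t_\theta)\le\pi_0\tilde S_0(\hat t_\theta)+o(1)\le Q_\infty(\hat t_\theta)+o(1)\le\omega\alpha+o(1)$, completing the reduction of the first paragraph. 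Finally, I would note that condition \eqref{lim_cond} supplies a strict interior point of the feasible set, keeping $\hat t_\theta$ bounded away from $0$ and guaranteeing the procedure is non-degenerate; the boundary subtleties ($<$ versus $\le$ in the rejection rule, and attainment of the supremum) are harmless because all limiting distribution functions are continuous.
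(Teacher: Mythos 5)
Your proof is correct, and its skeleton coincides with the paper's: you use the same identity $I\{F_i<t\}I\{S_i\ge\theta t\}=I\{F_i<t\}-I\{S_i<\theta t\}$ (valid since $F_i\le S_i$ and $\theta t<t$), the same two applications of \eqref{SF} to lower-bound the limiting estimator by $\pi_0\tilde S_0(t)$ (this is exactly the content of \lemref{piasymp}), and the same Glivenko--Cantelli/P\'olya uniformization to cope with the randomness of $\hat t_\theta$ (cf.\ \lemref{piasymp2}). Two genuine deviations are worth recording. First, you normalize all empirical quantities by $m$ rather than by the post-filter count $\sum_i I\{F_i<t\}$; since your monotone pieces then converge uniformly on all of $[0,1]$ (after the routine extension of the limits to $t=0$, using $\tilde S_0(t)\le t\tilde F_0(t)\to 0$), your evaluation at $\hat t_\theta$ requires no lower bound on $\hat t_\theta$. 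The paper's lemmas, by contrast, are uniform only over $t\ge\delta$ precisely because they divide by $\sum_i I\{F_i<t\}$, and condition \eqref{lim_cond} is invoked to guarantee $\hat t_\theta\ge t'>0$ eventually; in your route \eqref{lim_cond} becomes genuinely superfluous rather than a mere non-degeneracy safeguard, which is a mild strengthening of the theorem. Second, your endgame differs: the paper passes from the a.s.\ bound $\limsup_m V_m/m\le\omega\alpha$ to the conclusion via reverse Fatou and Markov, obtaining $\limsup_m \Pr(V_m\ge k)\le\alpha$, whereas you exploit $\omega\alpha<\omega$ (valid since $\alpha<1$) to conclude $V_m<k$ eventually a.s., hence $k$-FWER $\to 0$ by dominated convergence---a strictly stronger conclusion that exposes slack in the Markov step under the linear regime $k=\omega m$. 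The boundary items you flagged do close as claimed---true-null $S_i$ have no atom at $0$ by validity of the underlying $p$-values, and the $\le$-versions of the empirical processes inherit uniform convergence from the $<$-versions because the limit functions are continuous---but a complete write-up should state these verifications explicitly rather than deferring them to a closing remark.
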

The technical condition stated in \eqref{lim_cond} ensures that $\hat{t}_{\theta}$ is well behaved in the limit. The proof of \thmref{AdaBon_kFWER_control}, given in \appref{AdaBon_kFWER_control}, first establishes that the expected number of false discoveries is asymptotically bounded by $k\alpha$, and then applies Markov’s inequality to show that the $k$-FWER is below $\alpha$.

\section{Simulation studies}\label{sec:num}
\subsection{Simulated data}\label{sec:sim_data}
Consider a meta-analysis consisting of $m = 500$ features across $n = 4$ studies. For each feature $i$ and study $j$, let $\mu_{ij} \in \mathbb{R}$ be an effect parameter, and consider the null hypothesis $H_{ij}: \mu_{ij} = 0$ with the corresponding alternative $\mu_{ij} \neq 0$. The data for this meta-analysis are simulated according to the steps provided below.

For a given correlation parameter $\rho \in [-1, 1]$, we arbitrarily partition the $500$ features into $b$ blocks of size $500/b$ each, where 
\begin{equation*}
   b = \begin{cases}
5 & \text{if } \rho \geq 0, \\
250 & \text{if } \rho < 0.
\end{cases}
\end{equation*}
We then generate standard normal variables $\epsilon_{ij} \sim \text{N}(0,1)$ for each pair $(i,j)$, subject to the following dependence structure:
\begin{itemize}
    \item The sequences $(\epsilon_{i1})_{i=1}^{500}$, $(\epsilon_{i2})_{i=1}^{500}$, $(\epsilon_{i3})_{i=1}^{500}$, and $(\epsilon_{i4})_{i=1}^{500}$ are mutually independent;
    \item Within each study $j$, 
    \begin{equation*}
        \text{cor}(\epsilon_{ij},  \epsilon_{i'j}) = 
        \begin{cases}
            \rho, &\text{if $i$ and $i'$ belong to the same block and $i \neq i'$;}\\
            1, &\text{if $i = i'$;}\\
            0, &\text{if $i$ and $i'$ do not belong to the same block}.
        \end{cases}
    \end{equation*}
\end{itemize}
The following process is then used to generate the $p$-values $(P_{ij})_{500 \times 4}$. For each $i = 1, \dots, 500$:
\begin{enumerate}[\indent (1)]
    \item Generate $c_i \sim \text{Bernoulli}(\pi_1)$, where $\pi_1 \in [0,1]$ is a given signal density parameter.
        \begin{enumerate}[(i)]
            \item If $c_i = 0$, set $(\mu_{i1},\mu_{i2},\mu_{i3},\mu_{i4}) = (0,0,0,0)$;
            \item If $c_i = 1$, sample $(\mu_{i1},\mu_{i2},\mu_{i3},\mu_{i4})$ uniformly from the set $\{ 0, 4 \}^4$.
        \end{enumerate}
    \item For $j = 1, \dots, 4$, compute $P_{ij} = 1 - \Phi(\mu_{ij} + \epsilon_{ij})$.
\end{enumerate}
Hence, by construction, the $p$-values in $(P_{ij})_{500 \times 4}$ are valid and independent across studies, while exhibiting block-wise equicorrelation within each study.

\subsection{Compared methods}
The following multiple testing procedures, representing a mix of state-of-the-art and well-established approaches, are considered in our simulations:
\begin{enumerate}[\indent $(a)$]
    \item AdaFilter-AdaBon: \defref{AdafilterBonpi} with $\theta = 0.5$.
    \item AdaFilter-Bon: \defref{AdafilterBon}.
    \item Bonferroni: \defref{Bon}.
    \item Hochberg: The step-up $k$-FWER procedure described in Theorem 4.2 of \citet{Sarkar2007}.
    \item Adaptive Bonferroni: Definition 1 of \citet{Guo2009} with tuning parameter $\lambda = 0.5$.
    \item Adaptive Hochberg: Definition 4 of \citet{Sarkar2012} with $\hat{n}(\kappa)$ where $\kappa = 490$.
\end{enumerate}
Let $W_{u}(\cdot)$ denote the cdf of a chi-squared distribution with $2 \cdot (n - u + 1)$ degrees of freedom. Methods $(c)$---$(f)$ are implemented in our simulations with the PC $p$-value inputs $(P^{u/n}_{i})_{i=1}^m$, where each $P^{u/n}_i$ is constructed as the Fisher-combined PC $p$-value \citep[Eq.~5]{Benjamini2008}:
\begin{equation*}
    f_{\text{Fisher}}(P_{i1},\dots,P_{in}; u) \equiv 1 - W_{u} \left(-2 \sum^{n}_{j = u} \log P_{i(j)} \right).
\end{equation*}
This choice is motivated by the fact that Fisher-combined PC $p$-values are less conservative than their Bonferroni-combined PC $p$-values \citep[Sec. 2.1]{Bogomolov2023}. By contrast, methods $(a)$ and $(b)$, by definition, must operate on Bonferroni-combined PC $p$-values.

\subsection{Simulation results}\label{sec:sim_res}
We apply methods $(a)$--–$(f)$ to the simulated data described in \secref{sim_data} to test the PC nulls $(H^{u/4}_i)_{i=1}^{500}$ for $u \in \{2, 3, 4 \}$, targeting FWER control at level $\alpha = 0.05$. The following parameter ranges are considered for $\pi_1$ and $\rho$:
\begin{itemize}
    \item $\pi_1 \in \{ 0.025, 0.05, 0.075, 0.10, 0.125, 0.15 \}$.
    \item $\rho \in \{ -0.8, -0.2, 0.2, 0.8 \}$.
\end{itemize}
By design, each $H^{u/n}_i$ becomes more likely to be false as $\pi_1$ increases from $0.025$ to $0.15$. Moreover, the block-wise correlation strength among the $p$-values in $(P_{ij})_{500 \times 4}$ shifts from strongly negative to strongly positive as $\rho$ increases from $-0.8$ to $0.8$. The simulation results are presented in \figref{sim_k1}, where the empirical FWER and TPR level of the compared methods are computed based on $1,000$ repetitions. The following observations can be made:

\begin{itemize}
    \item AdaFilter-AdaBon demonstrated finite-sample FWER control across all settings of $\pi_1$, $u$, and $\rho$. This is reassuring, given that we only established its theoretical control guarantee asymptotically as $m \to \infty$ (\thmref{AdaBon_kFWER_control}). We also observe that AdaFilter-AdaBon’s FWER level is uniformly greater than that of AdaFilter-Bon, indicating that the former is less conservative than the latter. Outside of AdaFilter-AdaBon and AdaFilter-Bon, the remaining methods generally exhibited extremely conservative FWER control. As $\rho$ increases from $-0.8$ to $0.8$, the FWER level for all methods decreases slightly.
    \item AdaFilter-AdaBon outperforms AdaFilter-Bon under most parameter settings, with notably higher power when $\pi_1 \geq 0.10$ and $u = 2$. Only when $\pi_1 = 0.025$ and $u = 4$ does AdaFilter-Bon match the power of AdaFilter-AdaBon. The remaining methods exhibited substantially lower power than both AdaFilter-AdaBon and AdaFilter-Bon, particularly when $u = 4$, where their power was close to zero.
\end{itemize}

Hence, our simulations demonstrate that AdaFilter-AdaBon delivers state-of-the-art performance, especially when the replicability requirement ($u$) is small and the signal density ($\pi_1$) is large. Even in less favorable settings, its power remains comparable to that of its predecessor, AdaFilter-Bon. 

Additional simulation results comparing the methods under $k$-FWER control for $k = 5$ and $10$, using the same $\pi_1$ and $\rho$ settings as above, are provided in \appref{additional_sim_results}. The simulation results presented in this section and in \appref{additional_sim_results} can be reproduced in \texttt{R} via the steps provided in \href{https://github.com/ninhtran02/AdaFilterAdaBon}{https://github.com/ninhtran02/AdaFilterAdaBon}.

\begin{figure}
    \centering
    \includegraphics[width=0.85\linewidth]{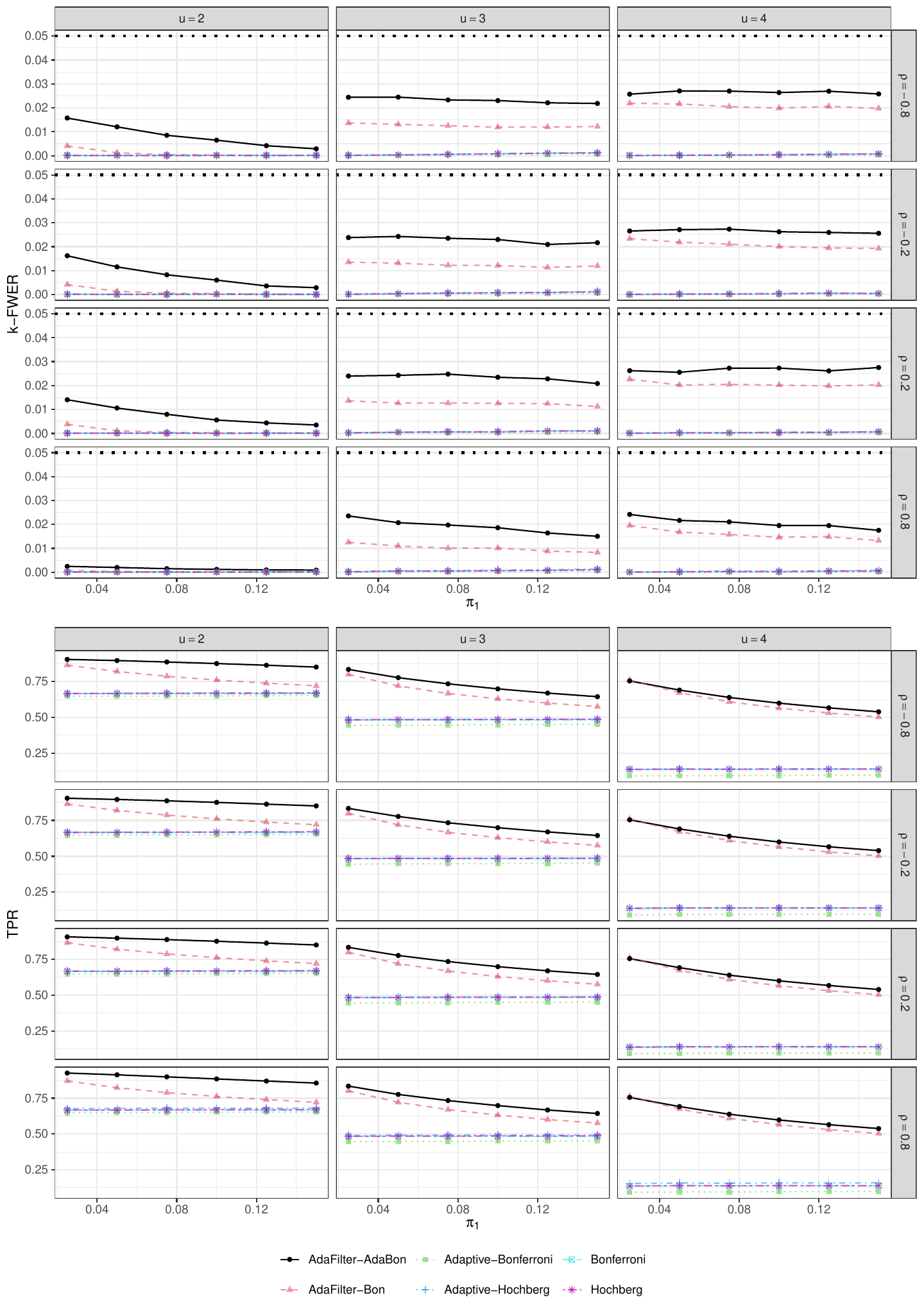}
    \caption{Empirical FWER and TPR levels of the compared methods for the parameter settings $\alpha = 0.05$, $u \in \{ 2,3,4 \}$, $\pi_1 \in \{ 0.025, 0.05, 0.075, 0.10, 0.125, 0.15 \}$, and $\rho \in \{ -0.8, -0.2, 0.2, 0.8 \}$.}
    \label{fig:sim_k1}
\end{figure}

\section{Discussion}\label{sec:discussion}
We proposed AdaFilter-AdaBon, a $p$-value-based procedure for testing partial conjunction hypotheses with asymptotic $k$-FWER control. By combining filtering with a post-filter null proportion estimator, our method improves power over its predecessor, AdaFilter-Bon \citep{Wang2022}, while maintaining robust finite-sample $k$-FWER control, as shown in our simulation studies.

AdaFilter-AdaBon can be extended to control the false exceedance rate (FDX) and the false discovery rate (FDR). For a set $\mathcal{R} \subseteq \{ 1 , \dots, m \}$ of rejected PC nulls and tolerance level $\gamma \in (0,1)$, the FDX is defined as
\begin{equation*}
    \text{FDX} \equiv \text{FDX}(\mathcal{R}; \gamma) = \Pr\left( \frac{\sum^m_{i=1} I \{ i \in \mathcal{R} \} I \{ \text{$H^{u/n}_i$ is true} \} }{1 \vee \sum^m_{i = 1} I \{ i \in \mathcal{R} \} } \geq \gamma \right),
\end{equation*}
which is the probability that the proportion of false discoveries among the total number of discoveries exceeds $\gamma$. The false discovery rate is defined as
\begin{equation*}
    \text{FDR} \equiv \text{FDR}(\mathcal{R}) = \Pr\left[ \frac{\sum^m_{i=1} I \{ i \in \mathcal{R} \} I \{ \text{$H^{u/n}_i$ is true} \} }{1 \vee \sum^m_{i = 1} I \{ i \in \mathcal{R} \} }  \right] \in (0,1),
\end{equation*}
which is the expected proportion of false discoveries among the total number of discoveries. The following procedure, based on AdaFilter-AdaBon, controls both FDX and FDR asymptotically: 

\begin{definition}\label{def:AugmentedAdafilterBonpi}
Let $\alpha \in [0,1]$ be a $k$-FWER target for a given $k \in \mathbb{N}$, $u \in \{2, \dots, m\}$ be a replicability level, $\theta \in (0,1)$ be a tuning parameter, and $F_i$ and $S_i$ be as defined in \eqref{F_i} and \eqref{S_i}, respectively. For $i = 1, \dots, m$, reject $H^{u/n}_i$ if $S_i \leq \hat{\tau}$, where
\begin{align}\label{hat_tau}
    \hat{\tau} \equiv \sup \left\{ \tau \in [0,1]:  \frac{ \sum^{m}_{i = 1} I \{ \hat{t}_{\theta}  \leq S_i \leq \tau \} + k}{1 \vee \sum^{m}_{i = 1} I \{ S_i \leq \tau \}} \leq \gamma \right\}
\end{align}
and $\hat{t}_{\theta}$ is AdaFilter-AdaBon's rejection threshold as defined in \eqref{t_hat}.
\end{definition}

\begin{corollary}[FDX and FDR control]\label{cor:AdaBon_FDX_control}
    If \thmref{AdaBon_kFWER_control} holds, then the procedure in \defref{AugmentedAdafilterBonpi} has the following asymptotic error properties:
    \begin{equation*}
        \limsup_{m \to \infty} \text{FDX}(\mathcal{R}) \leq \alpha \quad \text{and} \quad \limsup_{m \to \infty} \text{FDR}(\mathcal{R}) \leq \alpha + \gamma,
    \end{equation*}
    where $\mathcal{R} = \{ i \in \{ 1, \dots, m \} : S_i \leq \hat{\tau} \}$.
\end{corollary}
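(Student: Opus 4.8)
The plan is to control the false discovery proportion $\mathrm{FDP} \equiv V/(1 \vee |\mathcal{R}|)$ pathwise, where $V \equiv \sum_{i=1}^m I\{S_i \le \hat\tau\}\, I\{H_i^{u/n} \text{ is true}\}$ counts the false discoveries of $\mathcal{R}$, and then read off both error rates from this single bound. The guiding idea is that the augmented threshold $\hat\tau$ in \eqref{hat_tau} only adds rejections in the band $(\hat t_\theta, \hat\tau]$ beyond those already made by AdaFilter-AdaBon at $\hat t_\theta$, and the defining inequality of $\hat\tau$ caps the number of these augmenting rejections so tightly that the $\mathrm{FDP}$ can exceed $\gamma$ only when the base procedure has itself already committed $k$ false discoveries. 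Since that is exactly the $k$-FWER failure event controlled by \thmref{AdaBon_kFWER_control}, the corollary should reduce to that theorem.

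First I would establish the pathwise decomposition. Writing $\mathcal{R}_0 \equiv \{ i : S_i \le \hat t_\theta \}$, $V_0 \equiv \sum_{i=1}^m I\{S_i \le \hat t_\theta\}\, I\{H_i^{u/n} \text{ is true}\}$ for the false discoveries of the base procedure, and $A \equiv \sum_{i=1}^m I\{\hat t_\theta \le S_i \le \hat\tau\}$ for the size of the augmentation band, splitting $V$ according to whether $S_i < \hat t_\theta$ gives $V \le V_0 + A$, valid whether or not $\hat\tau \ge \hat t_\theta$. I would then invoke the feasibility of $\hat\tau$, i.e. $A + k \le \gamma\,(1 \vee |\mathcal{R}|)$, to obtain $V \le V_0 + \gamma (1 \vee |\mathcal{R}|) - k$ and hence the pathwise bound
\[
    \mathrm{FDP} \le \gamma + \frac{V_0 - k}{1 \vee |\mathcal{R}|}.
\]
Because $1 \vee |\mathcal{R}| > 0$, this forces the inclusion $\{\mathrm{FDP} \ge \gamma\} \subseteq \{V_0 \ge k\}$.

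Next I would convert this inclusion into the two error bounds. As $V_0$ is precisely the false-discovery count of the rejection set $\mathcal{R}_0$ appearing in \thmref{AdaBon_kFWER_control}, the event $\{V_0 \ge k\}$ is the $k$-FWER failure event of AdaFilter-AdaBon, so $\FDX(\mathcal{R}) = \Pr(\mathrm{FDP} \ge \gamma) \le \Pr(V_0 \ge k) = k\text{-FWER}(\mathcal{R}_0)$, and $\limsup_{m\to\infty} \FDX(\mathcal{R}) \le \alpha$ follows immediately from that theorem. For the FDR I would split the expectation of $\mathrm{FDP} \in [0,1]$ over $\{\mathrm{FDP} \ge \gamma\}$ and its complement: the first piece is at most $\Pr(\mathrm{FDP} \ge \gamma) = \FDX(\mathcal{R})$ since $\mathrm{FDP} \le 1$, and the second is at most $\gamma$, giving $\mathrm{FDR}(\mathcal{R}) \le \FDX(\mathcal{R}) + \gamma$ and therefore $\limsup_{m\to\infty}\mathrm{FDR}(\mathcal{R}) \le \alpha + \gamma$.

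The main obstacle is justifying the feasibility inequality $A + k \le \gamma\,(1\vee|\mathcal{R}|)$ at the supremum $\hat\tau$ itself, rather than only for $\tau < \hat\tau$. The numerator $\sum_i I\{\hat t_\theta \le S_i \le \tau\}$ and denominator $\sum_i I\{S_i \le \tau\}$ are right-continuous step functions that jump only at the $S_i$ (the same structural observation underlying \thmref{rej_eq}), so the feasible set need not contain its supremum, and right-continuity in fact yields the reverse inequality at a jump point coinciding with $\hat\tau$. I would resolve this exactly as in \thmref{rej_eq}, evaluating the constraint on the finite candidate grid so that the operative threshold is the largest feasible step point (equivalently, reading the borderline rejections through the strict inequality $S_i < \hat\tau$); under the continuity of the limiting distribution functions assumed in \assumpref{asymp}, any single jump contributes only $O(1/m)$ to the ratio, so the potential violation is $o(1)$ and is absorbed into the $\limsup$, leaving the asymptotic bounds intact.
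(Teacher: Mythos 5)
Your proposal is correct and follows essentially the same route as the paper: the defining inequality of $\hat{\tau}$ in \eqref{hat_tau} is used to reduce $\{\mathrm{FDP} \ge \gamma\}$ to the $k$-FWER failure event $\{V_0 \ge k\}$ of the base AdaFilter-AdaBon procedure (the paper phrases this as a chain of event inclusions in probability rather than your pathwise FDP bound, but the decomposition into base rejections plus the band between $\hat{t}_{\theta}$ and $\hat{\tau}$ is identical), after which FDR control follows from the same split $\mathrm{FDR} \le \gamma + \mathrm{FDX}$ and \thmref{AdaBon_kFWER_control}. Your closing paragraph on feasibility at the supremum flags a subtlety that the paper's first inequality silently assumes (its proof simply asserts it ``follows from \eqref{hat_tau}''), and your grid-evaluation/strict-inequality fix is a sound way to close it.
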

The proof of \corref{AdaBon_FDX_control} is given in \appref{AdaBon_FDX_control}. Within the proof, it is shown that the procedure in \defref{AugmentedAdafilterBonpi} controls FDX and FDR indirectly through $k$-FWER control. Hence, an avenue for future work is to develop more direct approaches for controlling FDX and FDR that incorporate filtering and our post-filter null proportion estimator. This may lead to improved power over the procedure presented in \defref{AugmentedAdafilterBonpi}.

\bibliographystyle{apalike}
\bibliography{References}

\appendix
\section{Results and proofs relating to \secref{problem}}

\subsection{Technical lemmas for proving \thmref{AdaBon_FWER_control}}\label{app:threshold}
For all $i \in \{ 1, \dots, m \}$, let
\begin{equation}\label{t_for_proof}
    \hat{t}_{i}  = \sup \left\{ t \in [0,1] : t \cdot \left( 1 + \sum_{\ell \neq i} I \{ F_{\ell} < t \} \right) \leq \alpha \right\}
\end{equation}
be a threshold similar to $\hat{t}$ in \eqref{t_tilde}, but computed without data from feature $i$. We now state and prove two key lemmas involving $\hat{t}_i$, which are required for the proof of \thmref{AdaBon_FWER_control} in \appref{AdaBon_FWER_control}.

\begin{lemma}[$\hat{t}_{i}  \leq \hat{t}$]\label{lem:t_hat_leq}
    Let $\hat{t}$ and $\hat{t}_{i}$ be as defined in \eqref{t_tilde} and \eqref{t_for_proof}, respectively. Then $\hat{t}_{i}  \leq \hat{t}$.
\end{lemma}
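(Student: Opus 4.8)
The plan is to show that $\hat{t}_i$ itself lies in the feasible set whose supremum defines $\hat{t}$; the inequality $\hat{t}_i \leq \hat{t}$ is then immediate. The engine behind this is a pointwise comparison of the two set-building functions. Writing $M(t) \equiv \sum_{i=1}^m I\{F_i < t\}$ and $N_i(t) \equiv 1 + \sum_{\ell \neq i} I\{F_\ell < t\}$, I observe that $M(t) = I\{F_i < t\} + \sum_{\ell \neq i} I\{F_\ell < t\} \leq 1 + \sum_{\ell \neq i} I\{F_\ell < t\} = N_i(t)$ for every $t$, simply because a single indicator is at most $1$. Multiplying by $t \geq 0$ gives $t\,M(t) \leq t\,N_i(t)$ for all $t \in [0,1]$.

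First I would record that both $t \mapsto t\,M(t)$ and $t \mapsto t\,N_i(t)$ are non-decreasing, left-continuous step functions, since each $I\{F_\ell < t\}$ is non-decreasing and left-continuous and multiplication by the continuous increasing factor $t$ preserves these properties. Consequently the supremum defining $\hat{t}_i$ in \eqref{t_for_proof} is attained, i.e. $\hat{t}_i \cdot N_i(\hat{t}_i) \leq \alpha$: indeed $t\,N_i(t) \leq \alpha$ for every $t < \hat{t}_i$ (the feasible set is a down-set because $t\,N_i(t)$ is non-decreasing), and left-continuity passes this bound to the limit $t \uparrow \hat{t}_i$.

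Next I would verify the two feasibility conditions for $\hat{t}_i$ relative to the definition of $\hat{t}$ in \eqref{t_tilde}. Since $N_i(\hat{t}_i) \geq 1$, the attained bound yields $\hat{t}_i \leq \hat{t}_i\, N_i(\hat{t}_i) \leq \alpha \leq k\alpha$, placing $\hat{t}_i$ in the admissible domain $[0, k\alpha]$. For the set-building inequality, the pointwise comparison gives $\hat{t}_i\, M(\hat{t}_i) \leq \hat{t}_i\, N_i(\hat{t}_i) \leq \alpha \leq k\alpha$, where the last step uses $k \geq 1$. Hence $\hat{t}_i$ satisfies both requirements in \eqref{t_tilde}, so it belongs to the set over which $\hat{t}$ is the supremum, and therefore $\hat{t}_i \leq \hat{t}$.

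I expect no serious obstacle here. The only points needing care are the attainment of the supremum defining $\hat{t}_i$, which is handled cleanly by the left-continuity observation, and the bookkeeping that the right-hand bound in \eqref{t_for_proof} is $\alpha$ whereas that in \eqref{t_tilde} is $k\alpha$; the slack $k \geq 1$ is precisely what guarantees the leave-one-out threshold remains feasible for the full threshold.
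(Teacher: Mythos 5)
Your proof is correct and runs on the same engine as the paper's: the pointwise indicator bound $t \cdot \sum_{\ell=1}^m I\{F_\ell < t\} \leq t \cdot \bigl(1 + \sum_{\ell \neq i} I\{F_\ell < t\}\bigr)$, after which the paper simply observes that the feasible set defining $\hat{t}_i$ is contained in the one defining $\hat{t}$ and compares suprema directly—so your left-continuity/attainment detour showing that $\hat{t}_i$ itself is feasible is sound but dispensable (and, a trivial slip, $t\,N_i(t)$ is piecewise linear rather than a step function, though you only use monotonicity and left-continuity). Your explicit bookkeeping of the $\alpha$-versus-$k\alpha$ bound and the $[0,1]$-versus-$[0,k\alpha]$ domain, via $N_i(t) \geq 1$ and $k \geq 1$, is if anything more faithful to the literal definitions \eqref{t_tilde} and \eqref{t_for_proof} than the paper's own proof, which silently restates $\hat{t}$ with domain $[0,1]$ and bound $\alpha$.
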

\begin{proof}[Proof of \lemref{t_hat_leq}]
    Note that
    \begin{align*}
        t \cdot \sum^m_{\ell = 1} I \{ F_{\ell} < t \} =  t \cdot \left( I\{ F_i < t \} + \sum_{\ell \neq i} I \{ F_{\ell} < t \} \right) \leq t \cdot \left(1 + \sum_{\ell \neq i} I \{ F_{\ell} < t \} \right).
    \end{align*}
    As a consequence of the display above, we have that
    \begin{align*}
        \hat{t} &\equiv \sup \left\{ t \in [0,1] : t \cdot \sum^m_{\ell = 1} I \{ F_{\ell} < t \} \leq \alpha  \right\} \\
        &\geq \sup \left\{ t \in [0,1] : t \cdot \left(1 + \sum_{\ell \neq i} I \{ F_{\ell} < t \} \right) \leq \alpha  \right\} \equiv \hat{t}_{i}.
    \end{align*}
\end{proof}

\begin{lemma}[$\hat{t}_{i}  = \hat{t}$ if $F_i < \hat{t}$]\label{lem:gamma_i_equal_t_hat}
    Let $\hat{t}$ and $\hat{t}_{i}$ be as defined in \eqref{t_tilde} and \eqref{t_for_proof}, respectively. If $F_i < \hat{t}$, then $\hat{t}_{i}  = \hat{t}$.
\end{lemma}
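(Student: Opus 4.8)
The plan is to combine the inequality $\hat{t}_i \leq \hat{t}$ already obtained in \lemref{t_hat_leq} with the reverse inequality $\hat{t} \leq \hat{t}_i$, which is what the extra hypothesis $F_i < \hat{t}$ should buy us. Write the two defining constraints as $g(t) \equiv t \sum_{\ell=1}^m I\{F_\ell < t\}$ for $\hat{t}$ in \eqref{t_tilde} and $g_i(t) \equiv t\bigl(1 + \sum_{\ell \neq i} I\{F_\ell < t\}\bigr)$ for $\hat{t}_i$ in \eqref{t_for_proof}. The single algebraic observation driving the proof is that these differ only in that $g_i$ replaces the term $I\{F_i < t\}$ by the constant $1$; explicitly, $g_i(t) - g(t) = t\, I\{F_i \geq t\}$, so that $g_i(t) = g(t)$ precisely when $F_i < t$.

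With this identity, I would argue as follows. By definition of $\hat{t}$ as a supremum, there is a sequence $t_n \leq \hat{t}$ with $t_n \to \hat{t}$ and $g(t_n) \leq \alpha$ for every $n$. Because the hypothesis $F_i < \hat{t}$ is strict and $t_n \to \hat{t}$, we have $F_i < t_n$ for all sufficiently large $n$; for each such $n$ the identity above gives $g_i(t_n) = g(t_n) \leq \alpha$, so $t_n$ is feasible for the supremum defining $\hat{t}_i$, whence $\hat{t}_i \geq t_n$. Letting $n \to \infty$ yields $\hat{t}_i \geq \hat{t}$, and together with \lemref{t_hat_leq} this gives $\hat{t}_i = \hat{t}$.

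The only delicate point is ensuring that the approximating points $t_n$ eventually satisfy $F_i < t_n$, so that the two constraints genuinely coincide along the sequence; this is exactly where the strictness of $F_i < \hat{t}$ (rather than merely $F_i \leq \hat{t}$) is used. A cleaner but essentially equivalent route avoids the sequence altogether: since $g$ is a product of the continuous map $t \mapsto t$ and the left-continuous step function $t \mapsto \sum_\ell I\{F_\ell < t\}$, it is itself left-continuous, so the supremum is attained and $g(\hat{t}) \leq \alpha$; the hypothesis $F_i < \hat{t}$ then gives $g_i(\hat{t}) = g(\hat{t}) \leq \alpha$ directly, so $\hat{t}$ is feasible for $\hat{t}_i$ and $\hat{t}_i \geq \hat{t}$. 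I do not expect any further obstacle, as the remaining manipulations are routine.
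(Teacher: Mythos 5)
Your proof is correct and rests on the same key observation as the paper's own proof: for $t > F_i$ the two constraint functions coincide (your identity $g_i(t) - g(t) = t\, I\{F_i \geq t\}$), and the strictness of $F_i < \hat{t}$ localizes the supremum to $(F_i, 1]$. The only cosmetic difference is that you establish $\hat{t}_i \geq \hat{t}$ (via an approximating sequence, or more cleanly via left-continuity of $g$ so that $\hat{t}$ itself is feasible) and import $\hat{t}_i \leq \hat{t}$ from \lemref{t_hat_leq}, whereas the paper proves the equality directly by rewriting both suprema as the supremum of the common constraint over $t \in (F_i, 1]$, without invoking \lemref{t_hat_leq}.
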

\begin{proof}[Proof of \lemref{gamma_i_equal_t_hat}]
    If $F_i < \hat{t}$, it must be the case that
    \begin{align}
        \hat{t} &\equiv \sup \left\{ t \in [0,1] : t \cdot \sum^m_{\ell = 1} I \{ F_{\ell} < t \} \leq \alpha \right\} \nonumber \\
        &= \sup \left\{ t \in (F_i,1] : t \cdot \sum^m_{\ell = 1} I \{ F_{\ell} < t \}  \leq \alpha \right\} \nonumber \\
        &= \sup \left\{ t \in (F_i,1] : t \cdot \left( 1 +  \sum_{\ell \neq i} I \{ F_{\ell} < t \} \right) \leq \alpha \right\} \nonumber
    \end{align}
    where
    \begin{equation}\label{some_t_star}
       \exists \ t^* \in (F_i, 1] \quad \text{such that} \quad  t^* \cdot \left( 1 +  \sum_{\ell \neq i} I \{ F_{\ell} < t^* \} \right) \leq \alpha.
    \end{equation}
    From \eqref{t_for_proof}, we have that
    \begin{align}
        \hat{t}_{i}  &\equiv \sup \left\{ t \in [0,1] : t \cdot \left( 1 +  \sum_{\ell \neq i} I \{ F_{\ell} < t \} \right) \leq \alpha \right\} \nonumber \\
        &= \sup \left\{ t \in (F_i,1] : t \cdot \left( 1 +   \sum_{\ell \neq i} I \{ F_{\ell} < t \} \right) \leq \alpha \right\} \nonumber
    \end{align}
    where the second equality is a consequence of \eqref{some_t_star}.
    Hence, we can conclude that $\hat{t} = \hat{t}_{i} $.
\end{proof}

\subsection{Proof of \thmref{AdaBon_FWER_control}}\label{app:AdaBon_FWER_control}
\begin{proof}[\unskip\nopunct]
Let $V = \sum^m_{i=1} I\{ S_i < \hat{t} \} I \{ \text{$H^{u/n}_i$ is true} \}$ denote the number of false discoveries made by AdaFilter-Bon (\defref{AdafilterBon}), and $\hat{t}_i$ be the rejection threshold defined in \eqref{t_for_proof}. We have that
\begin{align*}
        \mathbb{E}[V] &= \mathbb{E} \left[ \sum^m_{i=1} I\{ S_i < \hat{t} \} I \{ \text{$H^{u/n}_i$ is true} \} \right] \\
        &= \mathbb{E} \left[ \sum^m_{i=1} I\{ S_i < \hat{t} \} I\{ F_i < \hat{t} \} I \{ \text{$H^{u/n}_i$ is true} \} \right] \\
        &= \mathbb{E} \left[ \sum^m_{i=1} I\{ S_i < \hat{t}_{i}  \} I\{ F_i < \hat{t}_{i}  \}  I \{ \text{$H^{u/n}_i$ is true} \} \right] 
\end{align*}
where the second equality is a consequence of $F_i \leq S_i$, and the third equality is a result of \lemref{gamma_i_equal_t_hat}. Continuing on from the previous display, we have that

\begin{align*}
        \mathbb{E}[V] &= \sum^m_{i=1}  \mathbb{E} \left[ \mathbb{E} \left[ I\left\{ S_i < \hat{t}_{i}  \right\} \Big| \hat{t}_{i} , F_i < \hat{t}_{i}   \right] I\{ F_i < \hat{t}_{i}  \}  I \{ \text{$H^{u/n}_i$ is true} \} \right] \\
         &\leq \sum^m_{i=1}  \mathbb{E} \left[ \hat{t}_{i}  \cdot I\{ F_i < \hat{t}_{i}  \}  I \{ \text{$H^{u/n}_i$ is true} \} \right] \\
        &\leq \sum^m_{i=1}  \mathbb{E} \left[ k \alpha \cdot   \frac{1}{1 + \sum_{\ell \neq i} I\{ F_{\ell} \leq \hat{t}_{i}  \} } \cdot I\{ F_i \leq \hat{t}_{i}  \}  I \{ \text{$H^{u/n}_i$ is true} \} \right] \\
      &\leq \sum^m_{i=1}  \mathbb{E} \left[  \frac{k \alpha}{\sum^m_{\ell = 1} I\{  F_{\ell} \leq \hat{t}_{i}  \} } I\{ F_i \leq \hat{t}_{i}  \}  I \{ \text{$H^{u/n}_i$ is true} \}    \right] 
    \end{align*}
    where the first inequality is a consequence of \lemref{conditional_validity} and the independence between $S_i$ and $\hat{t}_i$, and the second inequality follows from \eqref{t_for_proof}. Continuing on from the display above, we have that 
    \begin{align*}
        \mathbb{E}[V] &\leq \sum^m_{i=1}  \mathbb{E} \left[  \frac{k \alpha}{\sum^m_{\ell = 1} I\{  F_{\ell} \leq \hat{t}_{i}  \} } I\{ F_i \leq \hat{t}  \}  I \{ \text{$H^{u/n}_i$ is true} \}    \right]  \\
        &= \sum^m_{i=1}  \mathbb{E} \left[  \frac{k \alpha}{\sum^m_{\ell = 1} I\{  F_{\ell} \leq \hat{t}  \}} I\{ F_i \leq \hat{t}  \}  I \{ \text{$H^{u/n}_i$ is true} \}    \right] \\
      &= k \alpha \cdot   \mathbb{E} \left[  \frac{\sum^m_{i=1} I\{ F_i \leq \hat{t}  \}  I \{ \text{$H^{u/n}_i$ is true} \} }{\sum^m_{\ell = 1} I\{ F_{\ell} \leq \hat{t}  \} }     \right] 
    \end{align*}
    where the inequality is a result of \lemref{t_hat_leq} and the first equality is a consequence of \lemref{gamma_i_equal_t_hat}. By Markov’s inequality, it follows that
    \begin{align*}
        k\text{-FWER}(\mathcal{R}) = \Pr(V \geq k)  \leq \alpha  \mathbb{E} \left[  \frac{\sum^m_{i=1} I\{ F_i \leq \hat{t}  \}  I \{ \text{$H^{u/n}_i$ is true} \} }{\sum^m_{\ell = 1} I\{ F_{\ell} \leq \hat{t}  \} }     \right] .
    \end{align*}
\end{proof}

\section{Results and proofs relating to \secref{AdaFilterAdaBon}}
\subsection{Proof of \lemref{conditional_validity}}\label{app:conditional_validity}
\begin{proof}[\unskip\nopunct]
We have that
    \begin{align*}
        \Pr(S_{i} < t' | F_{i} < t'' ) &= \frac{\Pr(S_{i} < t' , F_{i} < t'' )}{\Pr(F_{i} < t'' )} \\
        & \leq \frac{\Pr(S_{i} < t'  )}{\Pr(F_{i} < t'' )} \\
        & \leq \frac{\Pr(S_{i} < t'  )}{\Pr(F_{i} < t' )} \\
        & = \frac{\Pr(S_{i} < t', F_{i} < t'  )}{\Pr(F_{i} < t' )} \\
        & = \Pr(S_i < t' | F_i < t')\\
        & \leq t'
    \end{align*}
    where the second equality is a result of $F_i \leq S_i$, and the last inequality is a consequence of Lemma 4.1 from \cite{Wang2022}.
\end{proof}

\subsection{Proof of \thmref{rej_eq}}\label{app:rej_eq}
\begin{proof}[\unskip\nopunct]

Without loss of generality, suppose $\mathcal{G}$ is ordered and is of size $|\mathcal{G}| = M$. To denote the ordered elements of $\mathcal{G}$, we let
\begin{equation*}
    \mathcal{G} \equiv \left\{ G_1, G_2, \dots, G_{M} \right\}
\end{equation*}
where $0 = G_1 \leq G_2 \leq \cdots \leq G_{M} = 1$. Consider the functions $f:[0,1] \longrightarrow [0, \infty)$ and $g:[0,1] \longrightarrow 2^{\{ 1, \dots, m \}}$ defined below:
\begin{equation*}
    f(t) \equiv \sum^m_{i = 1} I \{ F_i < t \} I\{ S_i \geq \theta t \}  \quad \text{and} \quad g(t) \equiv \{ i \in \{ 1, \dots, m \} : S_i < t \}. 
\end{equation*}
It is not difficult to see that both $f(t)$ and $g(t)$ are step functions on the following intervals for $t$:
\begin{equation*}
    \mathcal{I}_1 = [G_1,G_2), \quad \mathcal{I}_2 = [G_2, G_3), \quad \dots , \quad \mathcal{I}_{M - 1} = [G_{M-1}, G_{M}), \quad \mathcal{I}_{M} = \{ G_{M} \},
\end{equation*}
where $\bigcup^{M}_{\ell = 1} \mathcal{I}_{\ell} = [0,1]$. That is, for any $t \in [0,1]$, $f(t)$ and $g(t)$ can be written as
\begin{equation*}
    f(t) = \sum^m_{i = 1} I \{ F_i < G_{\ell} \} I\{ S_i \geq \theta \cdot G_{\ell} \} \quad  \text{ and } \quad g(t) = \{ i \in \{ 1, \dots, m \} : S_i <  G_{\ell} \} \quad 
\end{equation*}
respectively, where
\begin{equation*}
    \text{$t \in \mathcal{I}_{\ell} =
    \begin{cases}
        [ G_{\ell}, G_{\ell + 1} ), &\text{ $\ell \neq M$} \\
        \{ G_M \}, &\text{ $\ell = M$} 
    \end{cases}
    $}.
\end{equation*}

Let $\hat{\ell}_{\theta} \in \{ 1, \dots, M \}$ be the index such that $\hat{t}_{\theta} \in \mathcal{I}_{\hat{\ell}_{\theta}}$. It follows that 
\begin{equation}\label{eq_res}
    \sum^m_{i = 1} I \{ F_i < \hat{t}_{\theta} \} I\{ S_i \geq \theta \cdot \hat{t}_{\theta} \} = f(\hat{t}_{\theta}) =  \sum^m_{i = 1} I \{ F_i < G_{\hat{\ell}_{\theta}} \} I\{ S_i \geq \theta \cdot G_{\hat{\ell}_{\theta}} \} 
\end{equation}
and
\begin{equation}\label{rej_eq}
    \left\{ i \in \{ 1, \dots, m \} : S_i < \hat{t}_{\theta}  \right\} = g(\hat{t}_{\theta}) = \left\{ i \in \{ 1, \dots, m \} : S_i <  G_{\hat{\ell}_{\theta}} \right\}.
\end{equation}

Since $\mathcal{G} \subseteq [0,1]$, it follows from the definition of $\hat{t}_{\theta}$ and $\breve{t}_{\theta}$ in  \eqref{t_hat} and \eqref{t_tilde_theta}, respectively, that $\breve{t}_{\theta} \leq \hat{t}_{\theta}$. Since $G_{\hat{\ell}_{\theta}}$ is the element in $\mathcal{G}$ that is closest to $\hat{t}_{\theta}$ without being strictly greater than $\hat{t}_{\theta}$, it must hold from the inequality $\breve{t}_{\theta} \leq \hat{t}_{\theta}$ that $\breve{t}_{\theta} \leq G_{\hat{\ell}_{\theta}}$. Now consider the following result:
\begin{align*}
    G_{\hat{\ell}_{\theta}} \cdot \frac{\sum^{m}_{i=1}I\{ F_i < G_{\hat{\ell}_{\theta}} \} I\{ S_i \geq \theta \cdot G_{\hat{\ell}_{\theta}} \} }{1 - \theta \cdot G_{\hat{\ell}_{\theta}} } &= \frac{G_{\hat{\ell}_{\theta}}}{1 - \theta \cdot G_{\hat{\ell}_{\theta}} } \cdot \sum^{m}_{i=1}I\{ F_i < \hat{t}_{\theta} \} I\{ S_i \geq \theta \cdot \hat{t}_{\theta} \}  \\
    &\leq \frac{\hat{t}_{\theta}}{1 - \theta \cdot \hat{t}_{\theta}  }  \cdot \sum^{m}_{i=1}I\{ F_i < \hat{t}_{\theta} \} I\{ S_i \geq \theta \cdot \hat{t}_{\theta} \}  \\
    &\leq k \alpha
\end{align*}
where the equality is a result of \eqref{eq_res}, the first inequality is a result of $\hat{t}_{\theta} \geq G_{\hat{\ell}_{\theta}}$, and the last inequality is a result of the definition of $\hat{t}_{\theta}$ in \eqref{t_hat}. The display directly above implies that $\breve{t}_{\theta} \geq G_{\hat{\ell}_{\theta}}$ by the definition of $\breve{t}_{\theta}$ in \eqref{t_tilde_theta}. However, since we have already established that $\breve{t}_{\theta} \leq G_{\hat{\ell}_{\theta}}$, it therefore must be that $\breve{t}_{\theta} = G_{\hat{\ell}_{\theta}}$. Hence, it follows from \eqref{rej_eq} that
\begin{equation*}
    \left\{ i \in \{ 1, \dots, m \} : S_i < \hat{t}_{\theta}  \right\} = \left\{ i \in \{ 1, \dots, m \} : S_i <  G_{\hat{\ell}_{\theta}} \right\} = \left\{ i \in \{ 1, \dots, m \} : S_i <  \breve{t}_{\theta} \right\}.
\end{equation*}
\end{proof}

\subsection{Technical lemmas for proving \thmref{AdaBon_kFWER_control}}
We provide in this section technical results analogous to those of Section 5 in \cite{Storey2004}. They serve as essential ingredients in proving \thmref{AdaBon_kFWER_control}, presented in \appref{AdaBon_kFWER_control}.

\begin{lemma}\label{lem:piasymp}
    Under \assumpref{asymp}, we have for fixed constants $\theta \in (0,1)$ and $\delta \in (0,1]$ that
    \begin{equation*}
        \lim_{m \to \infty} \inf_{t \geq \delta} \left( t \cdot  \frac{\sum^m_{i = 1} I \{ F_i < t \} I \{ S_i \geq \theta t \} }{(1 - \theta t) \sum^m_{i = 1} I \{ F_i < t\} } - \frac{\pi_0 \tilde{S}_0(t) m}{\sum^m_{i = 1} I\{ F_i < t \} } \right) \geq 0
    \end{equation*}
with probability $1$.
\end{lemma}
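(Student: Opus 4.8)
The plan is to write the bracketed quantity as a ratio and show the denominator is bounded away from zero, so the problem reduces to controlling the numerator. Set $D_m(t) \equiv \sum_{i=1}^m I\{F_i < t\}$ and
\[
C_m(t) \equiv \frac{t}{1-\theta t}\sum_{i=1}^m I\{F_i < t\}\,I\{S_i \ge \theta t\} - \pi_0 \tilde S_0(t)\,m,
\]
so the expression inside the infimum in \lemref{piasymp} is exactly $C_m(t)/D_m(t)$. Since $D_m(t) \ge \sum_{i:H_i^{u/n}\text{ true}} I\{F_i < t\}$, the limits \eqref{FS_0} and \eqref{pi0_limit} give $D_m(t)/m \to \pi_0\tilde F_0(t) + (1-\pi_0)\tilde F_1(t) \ge \pi_0\tilde F_0(\delta)$ for $t \ge \delta$, and \eqref{SF} forces $\tilde F_0(\delta) \ge \tilde S_0(\delta)/\delta > 0$. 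Hence $D_m(t)/m$ is eventually (almost surely) bounded below by a positive constant $c$ uniformly over $t \ge \delta$, while $D_m(t)/m \le 1$ trivially; it therefore suffices to show $\liminf_{m\to\infty}\inf_{t\ge\delta} C_m(t)/m \ge 0$.

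For the numerator I would first discard the false-null contributions (which are nonnegative) and then exploit the identity $I\{F_i < t\}\,I\{S_i \ge \theta t\} = I\{F_i < t\} - I\{S_i < \theta t\}$, valid for every $i$ because $F_i \le S_i$ and $\theta < 1$ give $\{S_i < \theta t\} \subseteq \{F_i < t\}$. This yields
\[
\sum_{i=1}^m I\{F_i < t\}\,I\{S_i \ge \theta t\} \ge \sum_{i:H_i^{u/n}\text{ true}} I\{F_i < t\} - \sum_{i:H_i^{u/n}\text{ true}} I\{S_i < \theta t\},
\]
a difference of two \emph{monotone} empirical counting functions of $t$, each governed by \eqref{FS_0}. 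Dividing by $m$ and passing to the limit, the right-hand side tends to $\pi_0\big(\tilde F_0(t) - \tilde S_0(\theta t)\big)$, so $C_m(t)/m$ is asymptotically at least
\[
L(t) \equiv \pi_0\!\left[\frac{t}{1-\theta t}\big(\tilde F_0(t) - \tilde S_0(\theta t)\big) - \tilde S_0(t)\right].
\]

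The main obstacle is that \assumpref{asymp} only supplies \emph{pointwise} almost-sure convergence, whereas I need the bound to hold uniformly over the region $t \ge \delta$ to control the infimum. I would upgrade pointwise to uniform convergence exactly as in \citet[Sec.~5]{Storey2004}: each normalized counting function $m_0^{-1}\sum_{i:\text{true}} I\{F_i < t\}$ and $m_0^{-1}\sum_{i:\text{true}} I\{S_i < \theta t\}$ is monotone in $t$ and converges pointwise a.s.\ to a \emph{continuous} limit, so a Pólya-type argument delivers uniform convergence on the compact set $[\delta,1]$. Since $t/(1-\theta t)$ is continuous and bounded there (as $1-\theta t \ge 1-\theta > 0$), the lower bound for $C_m(t)/m$ converges to $L(t)$ uniformly; writing $\epsilon_m$ for the uniform error, I obtain $C_m(t)/m \ge L(t) - \epsilon_m$ with $\epsilon_m \to 0$.

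It remains to verify $L(t) \ge 0$, which is where the conditional-validity property \eqref{SF} enters twice. Taking $t' = \theta t \le t'' = t$ gives $\tilde S_0(\theta t) \le \theta t\,\tilde F_0(t)$, and taking $t' = t'' = t$ gives $\tilde S_0(t) \le t\,\tilde F_0(t)$; substituting these shows
\[
\tilde S_0(\theta t) + \frac{1-\theta t}{t}\tilde S_0(t) \le \theta t\,\tilde F_0(t) + (1-\theta t)\tilde F_0(t) = \tilde F_0(t),
\]
which rearranges to $\frac{t}{1-\theta t}\big(\tilde F_0(t) - \tilde S_0(\theta t)\big) \ge \tilde S_0(t)$, i.e.\ $L(t) \ge 0$. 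Combining this with the uniform bound, $C_m(t)/m \ge L(t) - \epsilon_m \ge -\epsilon_m$ uniformly on $[\delta,1]$; dividing by $D_m(t)/m \in [c,1]$ then gives $\inf_{t\ge\delta} C_m(t)/D_m(t) \ge -\epsilon_m/c \to 0$, which is the claim. The only genuinely delicate step is the monotone-to-uniform upgrade; everything else is the indicator identity plus two applications of \eqref{SF}.
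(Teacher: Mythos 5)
Your proposal is correct and takes essentially the same route as the paper's proof: the same indicator identity $I\{F_i < t\}\,I\{S_i \geq \theta t\} = I\{F_i < t\} - I\{S_i < \theta t\}$ (via $\{S_i < \theta t\} \subseteq \{F_i < t\}$) and the same two applications of \eqref{SF} (at $t' = \theta t,\ t'' = t$ and at $t' = t'' = t$), with your discarding of the false-null terms at the empirical level being merely a cosmetic variant of the paper's cancellation $\tilde{S}_1(\theta t) \leq \tilde{F}_1(t)$ in the limit. If anything, your explicit monotone-plus-continuous-limit upgrade to uniform convergence on $[\delta,1]$ is more careful than the paper's own write-up, which passes from the pointwise statements \eqref{inter1} and \eqref{inter2} to the infimum claim without spelling out that step (leaning on the analogy with Section 5 of \citet{Storey2004}).
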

\begin{proof}[Proof of \lemref{piasymp}]
Note that $I \{ F_i < t \} = 1$ whenever $I \{ S_i < \theta t \} = 1$, since $F_i \leq S_i$ and $\theta t < t$. This gives us the following result:
\begin{align}
    \frac{\sum^m_{i = 1} I \{ F_i < t \} I \{ S_i \geq \theta t \} }{(1 - \theta t) \sum^m_{i = 1} I \{ F_i < t\} } &= \frac{\sum^m_{i = 1} I \{ F_i < t \} (1 - I \{ S_i < \theta t \}) }{(1 - \theta t) \sum^m_{i = 1} I \{ F_i < t\} } \nonumber \\
    &= \frac{\sum^m_{i = 1} I \{ F_i < t \}  - \sum^m_{i = 1} I \{ F_i < t \} I \{ S_i < \theta t \} }{(1 - \theta t) \sum^m_{i = 1} I \{ F_i < t\} } \nonumber \\
    &= \frac{\sum^m_{i = 1} I \{ F_i < t \}  - \sum^m_{i = 1} I \{ S_i < \theta t \} }{(1 - \theta t) \sum^m_{i = 1} I \{ F_i < t\} } \nonumber \\
    &= \frac{(1/m) \sum^m_{i = 1} I \{ F_i < t \}  - (1/m) \sum^m_{i = 1} I \{ S_i < \theta t \} }{(1 - \theta t) (1/m) \sum^m_{i = 1} I \{ F_i < t\} }. \label{pisumaymp}
\end{align}
Since $F_i \leq S_i$ and $\tilde{F}_1$ is a non-decreasing function by construction, it follows from \eqref{FS_1} in \assumpref{asymp} that
\begin{equation}\label{fracprop2}
    \tilde{S}_1(\theta t) \leq \tilde{F}_1(\theta t) \leq \tilde{F}_1(t)
\end{equation}
for all $t \in (0,1]$. We also have by \eqref{SF} in \assumpref{asymp} that
\begin{equation}\label{fracprop1}
   \tilde{S}_0(\theta t) \leq \theta t \tilde{F}_0(\theta t) \leq \theta t \tilde{F}_0(t) 
\end{equation}
for all $t \in (0,1]$. 

For the next part of this proof, we adopt the following notation for convenience:
\begin{equation}\label{full_FS}
    \tilde{F}(t) \equiv \pi_0 \tilde{F}_0(t) + \pi_1 \tilde{F}_1 (t) \quad \text{and} \quad \tilde{S}(t) \equiv \pi_0 \tilde{S}_0(t) + \pi_1 \tilde{S}_1 (t).
\end{equation}
By \eqref{FS_0}, \eqref{FS_1}, and \eqref{pi0_limit} in \assumpref{asymp}, and by \eqref{pisumaymp}, we have for all $t \in (0,1]$ that
\begin{align}
    \lim_{m \to \infty}   \frac{\sum^m_{i = 1} I \{ F_i < t \} I \{ S_i \geq \theta t \} }{(1 - \theta t) \sum^m_{i = 1} I \{ F_i < t\} } &= \frac{\tilde{F}(t) - \tilde{S}(\theta t) }{(1-\theta t) \tilde{F}(t)  } \nonumber \\
    &= \frac{\pi_0 \tilde{F}_0(t) + \pi_1 \tilde{F}_1(t)  - \pi_0 \tilde{S}_0(\theta t) - \pi_1 \tilde{S}_1(\theta t)  }{(1- \theta t) \tilde{F}(t)} \nonumber \\
    &\geq \frac{\pi_0 \tilde{F}_0(t) + \pi_1 \tilde{F}_1(t)  - \pi_0 \theta t \tilde{F}_0(t) - \pi_1 \tilde{F}_1(t)  }{(1- \theta t) \tilde{F}(t)} \nonumber  \\
    &= \frac{\pi_0 (1 - \theta t) \tilde{F}_0(t)  }{(1- \theta t) \tilde{F}(t)}  \nonumber \\
    &= \frac{\pi_0  \tilde{F}_0(t)  }{\tilde{F}(t)},
    \label{inter0}
\end{align}
where the inequality is a consequence of \eqref{fracprop1} and \eqref{fracprop2}. Since $\tilde{S}_0(t) \leq t \tilde{F}_0(t)$ holds by \eqref{SF} in \assumpref{asymp}, it follows from \eqref{inter0} that
\begin{equation}\label{inter1}
    \lim_{m \to \infty} \left( t \cdot \frac{\sum^m_{i = 1} I \{ F_i < t \} I \{ S_i \geq \theta t \} }{(1 - \theta t) \sum^m_{i = 1} I \{ F_i < t\} }  - \frac{\pi_0 \tilde{S}_0(t)}{\tilde{F}(t)} \right) \geq 0.
\end{equation}
By \eqref{FS_0}, \eqref{FS_1}, and \eqref{pi0_limit} in \assumpref{asymp}, we also have that
\begin{equation}\label{inter2}
    \lim_{m \to \infty} \frac{\pi_0 \tilde{S}_0(t) m}{\sum^m_{i = 1} I \{ F_i < t \} } = \frac{\pi_0 \tilde{S}_0(t)}{\tilde{F}(t)}.
\end{equation}
By combining \eqref{inter1} and \eqref{inter2}, it holds that
\begin{equation*}
        \lim_{m \to \infty} \inf_{t \geq \delta} \left(  t \cdot \frac{\sum^m_{i = 1} I \{ F_i < t \} I \{ S_i \geq \theta t \} }{(1 - \theta t) \sum^m_{i = 1} I \{ F_i < t\} }  - \frac{\pi_0 \tilde{S}_0(t) m}{\sum^m_{i = 1} I\{ F_i < t \} } \right) \geq 0
\end{equation*}
\end{proof}

\begin{lemma}\label{lem:piasymp2}
    Under \assumpref{asymp}, we have for fixed constants $\theta \in (0,1)$ and $\delta \in (0,1]$ that
    \begin{equation*}
        \lim_{m \to \infty} \sup_{t \geq \delta} \left| \frac{\sum^m_{i = 1} I\{ S_i < t \} I \{ \text{$H^{u/n}_i$ is true} \}}{\sum^m_{i=1} I \{ F_i < t \} } - \frac{\pi_0 m  \tilde{S}_0(t)}{\sum^m_{i=1} I \{ F_i < t \} }   \right| = 0
\end{equation*}
with probability $1$.
\end{lemma}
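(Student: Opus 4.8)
The plan is to rewrite the quantity inside the supremum as a single fraction and then control its numerator and denominator separately. Writing $N_0(t) \equiv \sum_{i=1}^m I\{S_i < t\}\,I\{H_i^{u/n}\text{ is true}\}$ and $D(t) \equiv \sum_{i=1}^m I\{F_i < t\}$, the expression equals
\[
\frac{\bigl| N_0(t) - \pi_0 m\, \tilde{S}_0(t)\bigr|}{D(t)} = \frac{\bigl| N_0(t)/m - \pi_0 \tilde{S}_0(t)\bigr|}{D(t)/m}.
\]
By \eqref{FS_0} and \eqref{pi0_limit} of \assumpref{asymp}, for each fixed $t$ we have $N_0(t)/m = \bigl(N_0(t)/m_0\bigr)\bigl(m_0/m\bigr) \to \pi_0 \tilde{S}_0(t)$ almost surely, so the numerator vanishes pointwise. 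Splitting $D(t)$ into its true- and false-null parts and invoking \eqref{FS_0}, \eqref{FS_1}, and \eqref{pi0_limit} gives $D(t)/m \to \tilde{F}(t) = \pi_0\tilde{F}_0(t) + \pi_1\tilde{F}_1(t)$ pointwise almost surely, where $\tilde{F}$ is as in \eqref{full_FS} and $\pi_1 = 1-\pi_0$ since $m_1/m \to 1-\pi_0$.

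The main obstacle is to upgrade these pointwise limits to uniform ones over $t \in [\delta,1]$; this is where the argument parallels Section 5 of \citet{Storey2004}. The key point is that each of the normalised counting functions $N_0(t)/m_0$, $(1/m_0)\sum_{i=1}^m I\{F_i<t\}\,I\{H_i^{u/n}\text{ true}\}$, and $(1/m_1)\sum_{i=1}^m I\{F_i<t\}\,I\{H_i^{u/n}\text{ false}\}$ is non-decreasing in $t$, while its almost-sure limit $\tilde{S}_0$, $\tilde{F}_0$, or $\tilde{F}_1$ is continuous (and, being a limit of non-decreasing functions, non-decreasing). I would then invoke the Pólya-type principle that pointwise convergence of monotone functions to a continuous monotone limit is automatically uniform on a compact interval. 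Together with $m_0/m\to\pi_0$ and $m_1/m\to\pi_1$, this yields $\sup_{t\geq\delta}\bigl|N_0(t)/m - \pi_0\tilde{S}_0(t)\bigr|\to 0$ and $\sup_{t\geq\delta}\bigl|D(t)/m - \tilde{F}(t)\bigr|\to 0$ almost surely.

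To finish, I would bound the denominator away from zero. Because $\tilde{F}_0$ is non-decreasing and \eqref{SF} gives $0 < \tilde{S}_0(\delta) \leq \delta\,\tilde{F}_0(\delta)$, it follows that $\tilde{F}(t) \geq \pi_0\tilde{F}_0(t) \geq \pi_0\tilde{F}_0(\delta) > 0$ for every $t\geq\delta$; hence, by the uniform convergence of $D(t)/m$, the infimum $\inf_{t\geq\delta} D(t)/m$ is bounded below by a strictly positive constant for all sufficiently large $m$, almost surely. The claim then follows from
\[
\sup_{t\geq\delta}\frac{\bigl|N_0(t)/m - \pi_0\tilde{S}_0(t)\bigr|}{D(t)/m} \leq \frac{\sup_{t\geq\delta}\bigl|N_0(t)/m - \pi_0\tilde{S}_0(t)\bigr|}{\inf_{t\geq\delta} D(t)/m} \longrightarrow 0
\]
almost surely, the numerator tending to zero while the denominator stays bounded below by a positive constant.
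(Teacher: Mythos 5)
Your proposal is correct and follows essentially the same route as the paper: both bound the quantity by $\sup_{t\geq\delta}\left|N_0(t)/m - \pi_0\tilde{S}_0(t)\right|$ divided by a lower bound on $D(t)/m$ at $t=\delta$, with the uniform convergence of the numerator obtained from monotonicity plus continuity of the limit (your P\'olya-type argument is precisely what underlies the paper's appeal to the Glivenko--Cantelli theorem under \assumpref{asymp}). If anything, your write-up is slightly more complete, since you explicitly verify via \eqref{SF} that $\tilde{F}_0(\delta)>0$ so the limiting denominator is bounded away from zero—a point the paper's proof uses implicitly when it asserts that $m/\sum_{i=1}^m I\{F_i<\delta\}$ stays bounded.
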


\begin{proof}[Proof of \lemref{piasymp2}]
By the Glivenko-Cantelli theorem, it follows from \assumpref{asymp} that 
\begin{equation}\label{GC}
    \lim_{m \to \infty} \sup_{0 \leq t \leq 1} \left| \frac{\sum^m_{i = 1} I\{ S_i < t \} I \{ \text{$H^{u/n}_i$ is true} \} }{m} - \pi_0 \tilde{S}_0(t) \right| = 0 
\end{equation}
with probability 1. To prove the lemma, note that
\begin{align*}
    \lim_{m \to \infty} &\sup_{t \geq \delta} \left| \frac{\sum^m_{i = 1} I\{ S_i < t \} I \{ \text{$H^{u/n}_i$ is true} \}}{\sum^m_{i=1} I \{ F_i < t \} } - \frac{\pi_0 m  \tilde{S}_0(t)}{\sum^m_{i=1} I \{ F_i < t \} }   \right| \\
    &\leq \lim_{m \to \infty} \sup_{t \geq \delta} \left| \frac{m}{\sum^{m}_{i=1} I\{ F_i < t \} } \right| \left| \frac{\sum^m_{i = 1} I\{ S_i < t \} I \{ \text{$H^{u/n}_i$ is true} \}}{ m } - \pi_0 \tilde{S}_0(t)  \right| \\
    &\leq  \lim_{m \to \infty} \left| \frac{m}{\sum^m_{i = 1} I \{ F_i < \delta \} } \right| \sup_{t \geq \delta} \left| \frac{\sum^m_{i = 1} I\{ S_i < t \} I \{ \text{$H^{u/n}_i$ is true} \} }{m} - \pi_0 \tilde{S}_0(t) \right| = 0
\end{align*}
where the equality is a consequence of \eqref{GC}.
\end{proof}

\begin{lemma}\label{lem:piasymp3}
    Under \assumpref{asymp}, we have for fixed constants $\theta \in (0,1)$ and $\delta \in (0,1]$ that
    \begin{equation*}
        \lim_{m \to \infty} \inf_{t \geq \delta} \left( \frac{\sum^m_{i = 1} I\{ F_i < t \} I \{ S_i \geq \theta t \}}{m (1 - \theta t)} \cdot t - \frac{\sum^m_{i = 1} I\{ S_i < t \} I \{ \text{$H^{u/n}_i$ is true} \}}{m} \right) \geq 0
\end{equation*}
with probability $1$.
\end{lemma}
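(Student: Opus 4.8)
The plan is to deduce \lemref{piasymp3} from \lemref{piasymp} and \lemref{piasymp2} by first factoring a common nonnegative, $t$-dependent multiplier out of the displayed difference, and then controlling the remaining bracket through the two preceding lemmas (all convergences below holding with probability one, and the argument being pathwise). Writing $N_m(t) \equiv \sum_{i=1}^m I\{F_i < t\}$ for the number of features surviving the filter at threshold $t$, I would record the exact identity
\[
t \cdot \frac{\sum_{i=1}^m I\{F_i<t\}I\{S_i \ge \theta t\}}{m(1-\theta t)} - \frac{\sum_{i=1}^m I\{S_i < t\}I\{ \text{$H^{u/n}_i$ is true} \}}{m} = \frac{N_m(t)}{m}\,\Delta_m(t),
\]
where
\[
\Delta_m(t) \equiv t \cdot \frac{\sum_{i=1}^m I\{F_i<t\}I\{S_i \ge \theta t\}}{(1-\theta t)N_m(t)} - \frac{\sum_{i=1}^m I\{S_i<t\}I\{ \text{$H^{u/n}_i$ is true} \}}{N_m(t)}.
\]
Since $0 \le N_m(t)/m \le 1$ for every $t$, the sign of the left-hand side is governed by $\Delta_m(t)$, and the task splits into (a) showing $\liminf_{m}\inf_{t\ge\delta}\Delta_m(t) \ge 0$, and (b) transferring this bound through the factor $N_m(t)/m$.

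For (a), I would add and subtract $\pi_0\tilde{S}_0(t)m/N_m(t)$ to decompose $\Delta_m(t) = \mathrm{(I)} + \mathrm{(II)}$, where
\[
\mathrm{(I)} = t\cdot\frac{\sum_{i=1}^m I\{F_i<t\}I\{S_i\ge\theta t\}}{(1-\theta t)N_m(t)} - \frac{\pi_0\tilde{S}_0(t)m}{N_m(t)},\quad \mathrm{(II)} = \frac{\pi_0\tilde{S}_0(t)m}{N_m(t)} - \frac{\sum_{i=1}^m I\{S_i<t\}I\{ \text{$H^{u/n}_i$ is true} \}}{N_m(t)}.
\]
Term $\mathrm{(I)}$ is precisely the quantity whose infimum over $t\ge\delta$ has nonnegative $\liminf$ by \lemref{piasymp}, and $\mathrm{(II)}$ is the negative of the quantity whose supremal absolute value tends to $0$ by \lemref{piasymp2}, so $\inf_{t\ge\delta}\mathrm{(II)} \to 0$ because $|\inf_{t\ge\delta}\mathrm{(II)}| \le \sup_{t\ge\delta}|\mathrm{(II)}|$. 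Using the subadditivity $\inf_{t\ge\delta}(\mathrm{(I)}+\mathrm{(II)}) \ge \inf_{t\ge\delta}\mathrm{(I)} + \inf_{t\ge\delta}\mathrm{(II)}$ and then taking $\liminf_m$ yields $\liminf_m \inf_{t\ge\delta}\Delta_m(t) \ge 0$.

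For (b), the key elementary observation is that for any scalar $c \in [0,1]$ and any real $x$ one has $cx \ge \min\{x,0\}$ (trivial when $x\ge0$, and when $x<0$ it follows from $c\le1$). Applying this pointwise with $c = N_m(t)/m \le 1$ and $x = \Delta_m(t)$ gives, for every $t \ge \delta$,
\[
\frac{N_m(t)}{m}\Delta_m(t) \ge \min\{\Delta_m(t),0\} \ge \min\Big\{\inf_{t\ge\delta}\Delta_m(t),\,0\Big\},
\]
whence $\inf_{t\ge\delta}\big(N_m(t)/m\big)\Delta_m(t) \ge \min\{\inf_{t\ge\delta}\Delta_m(t),\,0\}$; taking $\liminf_m$ and invoking (a) closes the argument.

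I expect the main obstacle to be conceptual rather than computational: the two preceding lemmas control $\Delta_m$ and the multiplier $N_m(t)/m$ separately and uniformly in $t$, yet the infimum distributes over neither the sum in $\Delta_m$ nor the product with $N_m(t)/m$. The resolution is exactly the two-step structure above—subadditivity of the infimum to glue \lemref{piasymp} and \lemref{piasymp2} together, followed by the $\min\{x,0\}$ bound, which exploits only $N_m(t)/m \le 1$ and therefore avoids needing any uniform lower bound on the surviving fraction for $t \ge \delta$.
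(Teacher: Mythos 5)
Your proposal is correct and follows essentially the same route as the paper's proof: factor out the surviving fraction $N_m(t)/m$, add and subtract $\pi_0\tilde{S}_0(t)m/N_m(t)$ inside the bracket, and control the two pieces via \lemref{piasymp} and \lemref{piasymp2} together with superadditivity of the infimum. Your step (b) handling of the multiplier via $cx \geq \min\{x,0\}$ for $c \in [0,1]$ is in fact slightly more careful than the paper's, which lower-bounds the factor by $\sum_{i=1}^m I\{F_i<\delta\}/m$ in a way that is only valid once the bracket is known to be asymptotically nonnegative—your version makes that sign issue explicit and closes it cleanly.
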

\begin{proof}[Proof of \lemref{piasymp3}]
We have that
\begin{align*}
    \lim_{m \to \infty} \inf_{t \geq \delta} \left( t \cdot \frac{\sum^m_{i = 1} I\{ F_i < t \} I \{ S_i \geq \theta t \}}{m (1 - \theta t)}  - \frac{\sum^m_{i = 1} I\{ S_i < t \} I \{ \text{$H^{u/n}_i$ is true} \}}{m} \right) \\
    = \lim_{m \to \infty} \inf_{t \geq \delta} \left( \frac{\sum^m_{i = 1} I\{ F_i < t \}}{m} \right) \left(   t \cdot  \frac{\sum^m_{i = 1} I\{ F_i < t \} I \{ S_i \geq \theta t \}}{(1 - \theta t) \sum^m_{i = 1} I\{ F_i < t \}}  - \frac{ \sum^m_{i = 1} I\{ S_i < t \} I \{ \text{$H^{u/n}_i$ is true} \}}{\sum^m_{i = 1} I\{ F_i < t \}} \right) \\
    \geq \lim_{m \to \infty} \left(\frac{\sum^m_{i = 1} I\{ F_i < \delta \}}{m} \right) \inf_{t \geq \delta}  \left(  t \cdot  \frac{\sum^m_{i = 1} I\{ F_i < t \} I \{ S_i \geq \theta t \}}{(1 - \theta t) \sum^m_{i = 1} I\{ F_i < t \}} - \frac{ \sum^m_{i = 1} I\{ S_i < t \} I \{ \text{$H^{u/n}_i$ is true} \}}{\sum^m_{i = 1} I\{ F_i < t \}} \right) \\
    \geq \lim_{m \to \infty} \left( \frac{\sum^m_{i = 1} I\{ F_i < \delta \}}{m} \right) \inf_{t \geq \delta} \left( \frac{\pi_0 \tilde{S}_0(t) m}{\sum^m_{i=1} I \{ F_i < t \} } - \frac{ \sum^m_{i = 1} I\{ S_i < t \} I \{ \text{$H^{u/n}_i$ is true} \}}{\sum^m_{i = 1} I\{ F_i < t \}} \right) \\
    \geq \lim_{m \to \infty} \left(- \frac{\sum^m_{i = 1} I\{ F_i < \delta \}}{m} \right) \sup_{t \geq \delta}  \left| \frac{\pi_0 \tilde{S}_0(t) m}{\sum^m_{i=1} I \{ F_i < t \} } - \frac{ \sum^m_{i = 1} I\{ S_i < t \} I \{ \text{$H^{u/n}_i$ is true} \}}{\sum^m_{i = 1} I\{ F_i < t \}} \right| \\
    = 0
\end{align*}
where the second inequality is a result of \lemref{piasymp} and the last equality is a consequence of \lemref{piasymp2}.
\end{proof}

\subsection{Proof of \thmref{AdaBon_kFWER_control}}\label{app:AdaBon_kFWER_control}
\begin{proof}[\unskip\nopunct]
For any $t \in (0,1]$, we have by \assumpref{asymp} that 
\begin{align}
     \lim_{m \to \infty} t \cdot \frac{\sum^m_{i=1} I \{ F_i < t \} I\{ S_i \geq \theta t \} }{m (1 - \theta t)} &= \lim_{m \to \infty} t \cdot \frac{\sum^m_{i=1} I \{ F_i < t \} (1 - I\{ S_i < \theta t \}) }{m (1 - \theta t)} \nonumber \\
      &= \lim_{m \to \infty} t \cdot \frac{\sum^m_{i=1} I \{ F_i < t \} - \sum^m_{i=1} I \{ F_i < t \} I\{ S_i < \theta t \} }{m (1 - \theta t) } \nonumber \\
       &= \lim_{m \to \infty} t \cdot \frac{\sum^m_{i=1} I \{ F_i < t \} - \sum^m_{i=1} I\{ S_i < \theta t \} }{m (1 - \theta t) } \nonumber \\
       &= t \cdot \frac{(\pi_0 \tilde{F}_0(t) + \pi_1 \tilde{F}_1(t) ) - (\pi_0 \tilde{S}_0(\theta t) + \pi_1 \tilde{S}_1(\theta t) ) }{1 - \theta t} \nonumber \\
       &= t \cdot  \frac{\tilde{F}(t) - \tilde{S}(\theta t) }{1 - \theta t} \label{nice_res}
\end{align}
with probability 1, where $\tilde{F}(t)$ and $\tilde{S}(t)$ are as defined in \eqref{full_FS}. By \eqref{lim_cond} in \assumpref{asymp} and \eqref{nice_res}, there exists a $t' > 0$ such that
\begin{equation}\label{aw}
   \alpha \cdot \omega - t' \cdot \frac{\tilde{F}(t') - \tilde{S}(\theta t') }{1 - \theta t'} = \frac{\epsilon}{2}
\end{equation}
for some constant $\epsilon > 0$. For a sufficiently large $m$, we also have by \eqref{lim_cond} in \assumpref{asymp} and \eqref{nice_res} that
\begin{align*}
    \left| t' \cdot \frac{\tilde{F}(t') - \tilde{S}(\theta t') }{1 - \theta t'} - t' \cdot \frac{\sum^m_{i=1} I \{ F_i < t' \} I\{ S_i \geq \theta t' \} }{m (1 - \theta t')} \right|  < \frac{\epsilon}{2},
\end{align*}
which then implies from \eqref{aw} that 
\begin{align*}
    t' \cdot \frac{\sum^m_{i=1} I \{ F_i < t' \} I\{ S_i \geq \theta t' \} }{m (1 - \theta t')} < \alpha \cdot \omega \implies t' \cdot \frac{\sum^m_{i=1} I \{ F_i < t' \} I\{ S_i \geq \theta t' \} }{1 - \theta t'} < \alpha \cdot k.
\end{align*}
Then by the definition of $\hat{t}_{\theta}$ in \eqref{t_hat}, it follows that
\begin{align*}
    \lim_{m \to \infty} \hat{t}_{\theta} \geq t'
\end{align*}
with probability 1. The display above and \lemref{piasymp3} then gives the following result:
\begin{align}
    \lim_{m \to \infty} \inf \left( \hat{t}_{\theta} \cdot \frac{\sum^m_{i = 1} I\{ F_i < \hat{t}_{\theta} \} I \{ S_i \geq \theta \hat{t}_{\theta} \}}{m (1 - \theta \hat{t}_{\theta})} - \frac{\sum^m_{i = 1} I\{ S_i < \hat{t}_{\theta} \} I \{ \text{$H^{u/n}_i$ is true} \}}{m} \right) \nonumber \\
    \geq  \lim_{m \to \infty} \inf_{t \geq \delta} \left( t \cdot \frac{\sum^m_{i = 1} I\{ F_i < t \} I \{ S_i \geq \theta t \}}{m (1 - \theta t)} - \frac{\sum^m_{i = 1} I\{ S_i < t \} I \{ \text{$H^{u/n}_i$ is true} \}}{m} \right) \geq 0 \label{liminfres}
\end{align}
for $\delta = t'/2$. Since it holds for any $m$ that
\begin{equation*}
    \hat{t}_{\theta} \cdot \frac{\sum^m_{i = 1} I\{ F_i < \hat{t}_{\theta} \} I \{ S_i \geq \theta \hat{t}_{\theta} \}}{1 - \theta \hat{t}_{\theta}}  \leq k \cdot \alpha,
\end{equation*}
by the definition of $\hat{t}_{\theta}$ in $\eqref{t_hat}$, it must follow from \eqref{liminfres} that
\begin{align*}
    \lim_{m \to \infty} \inf \left(  \frac{k \cdot \alpha}{m } - \frac{\sum^m_{i = 1} I\{ S_i < \hat{t}_{\theta} \} I \{ \text{$H^{u/n}_i$ is true} \}}{m} \right) \geq 0. 
\end{align*}
We can therefore conclude that
\begin{align*}
    \lim_{m \to \infty} \sup \left( \frac{ \sum^m_{i = 1} I\{ S_i < \hat{t}_{\theta} \} I \{ \text{$H^{u/n}_i$ is true} \} }{m} \right) \leq \omega \cdot \alpha. 
\end{align*}
By Fatou's lemma, we have
\begin{align*}
    \limsup_{m \to \infty} \left( \mathbb{E} \left[ \frac{\sum^m_{i = 1} I\{ S_i < \hat{t}_{\theta} \} I \{ \text{$H^{u/n}_i$ is true} \} }{m}  \right] \right) &\leq    \mathbb{E} \left[ \limsup_{m \to \infty} \left( \frac{\sum^m_{i = 1} I\{ S_i < \hat{t}_{\theta} \} I \{ \text{$H^{u/n}_i$ is true} \} }{m} \right) \right] \\
    &\leq \omega \cdot  \alpha.
\end{align*}
It then follows from the above display and Markov's inequality that
\begin{align*}
    \limsup_{m \to \infty} \Pr \left( \text{$k$-FWER}(\mathcal{R}) \geq k \right) &= \limsup_{m \to \infty} \Pr \left( \sum^m_{i = 1} I\{ S_i < \hat{t}_{\theta} \} I \{ \text{$H^{u/n}_i$ is true} \} \geq k  \right) \\
    &\leq \limsup_{m \to \infty} \left( \mathbb{E} \left[ \frac{\sum^m_{i = 1} I\{ S_i < \hat{t}_{\theta} \} I \{ \text{$H^{u/n}_i$ is true} \} }{k }  \right] \right) \\
        &= \limsup_{m \to \infty} \left( \mathbb{E} \left[ \frac{\sum^m_{i = 1} I\{ S_i < \hat{t}_{\theta} \} I \{ \text{$H^{u/n}_i$ is true} \} }{m \cdot \omega }  \right] \right) \\
        &\leq \frac{\omega \cdot \alpha}{\omega} \\
        &= \alpha.
\end{align*}
\end{proof}

\section{Additional simulation results}\label{app:additional_sim_results}
Additional simulation results comparing the methods under $k$-FWER control for $k = 5$ and $10$, using the same $\pi_1$ and $\rho$ settings in \secref{sim_res}, are provided in \figref{sim_k5} and \figref{sim_k10} respectively. Note that methods $(e)$ and $(f)$ are designed for FWER control and are therefore omitted from these comparisons.

In these additional simulations, the observed $k$-FWER levels for $k = 5$ and $k = 10$ across all methods remain close to zero. AdaFilter-AdaBon exhibits the least conservative control, most prominently when $\rho = 0.8$ (large positive correlation) and $k = 5$. Regarding power, AdaFilter-AdaBon surpasses the other methods by a clear margin across all simulation settings, with the improvements over the Bonferroni and Hochberg procedures being especially pronounced when $u = 4$.

\begin{figure}
    \centering
    \includegraphics[width=0.85\linewidth]{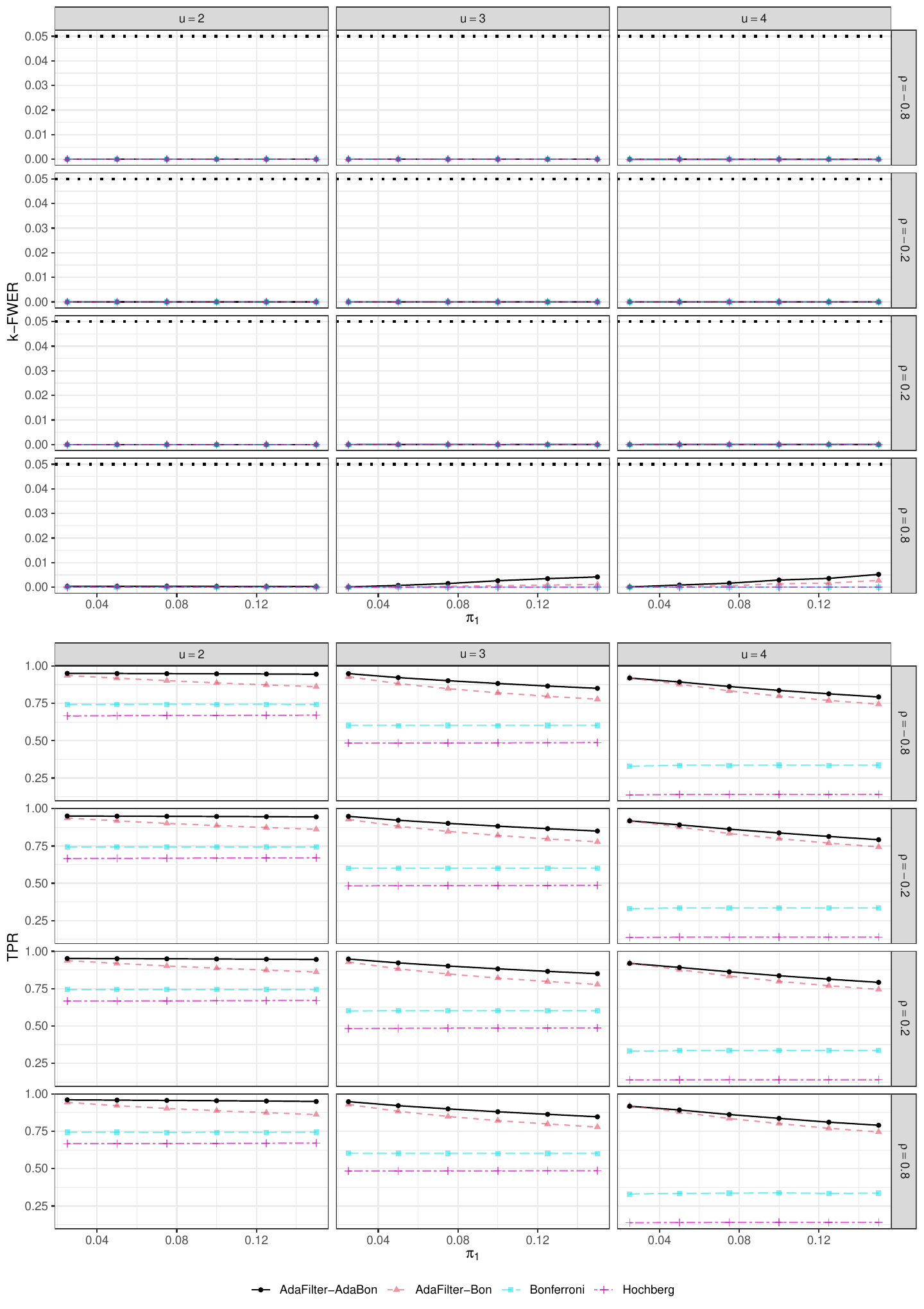}
    \caption{Empirical $k$-FWER and TPR levels of the compared methods for the parameter settings $k = 5$, $\alpha = 0.05$, $u \in \{ 2,3,4 \}$, $\pi_1 \in \{ 0.025, 0.05, 0.075, 0.10, 0.125, 0.15 \}$, and $\rho \in \{ -0.8, -0.2, 0.2, 0.8 \}$.}
    \label{fig:sim_k5}
\end{figure}

\begin{figure}
    \centering
    \includegraphics[width=0.85\linewidth]{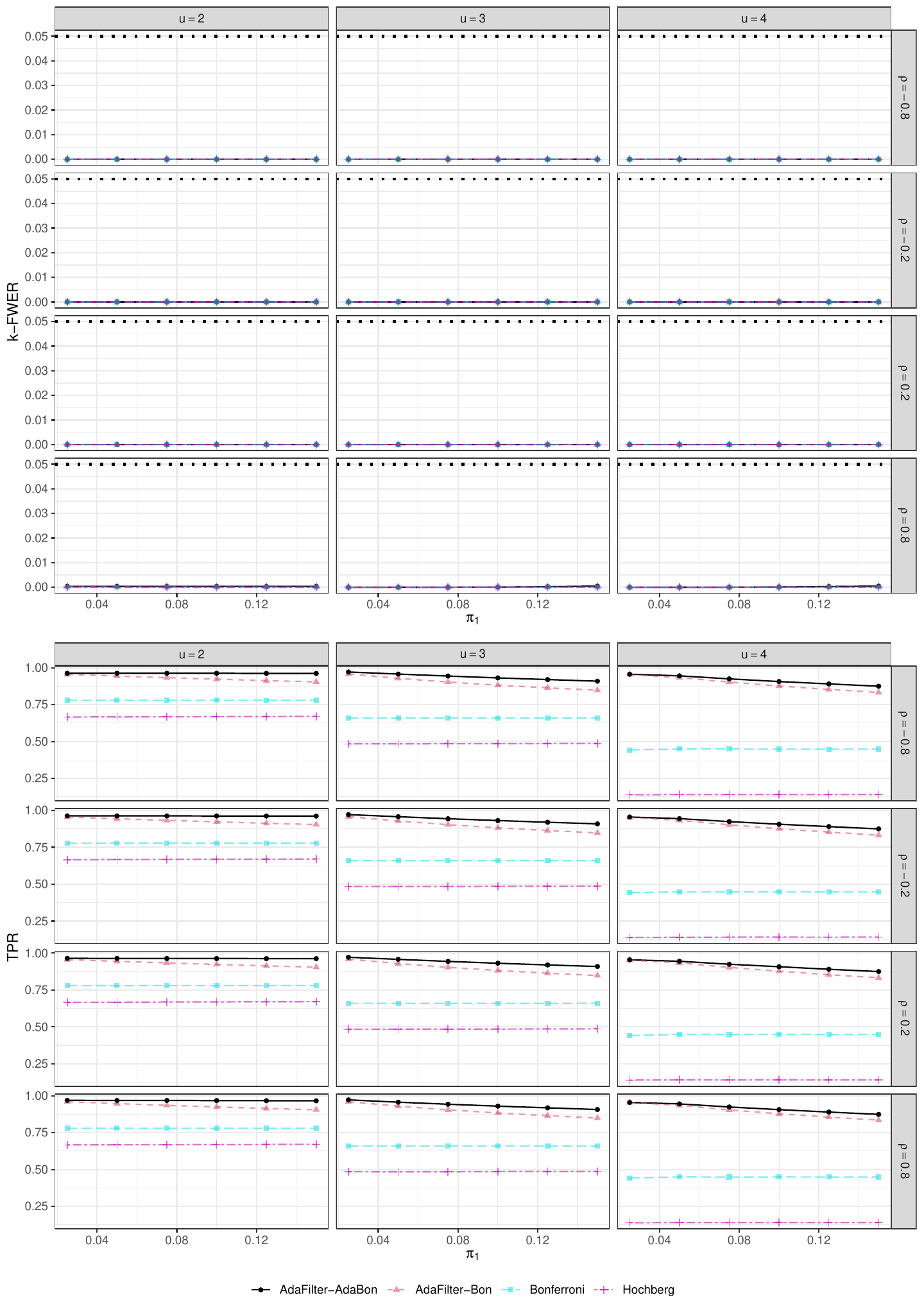}
    \caption{Empirical $k$-FWER and TPR levels of the compared methods for the parameter settings $k = 10$, $\alpha = 0.05$, $u \in \{ 2,3,4 \}$, $\pi_1 \in \{ 0.025, 0.05, 0.075, 0.10, 0.125, 0.15 \}$, and $\rho \in \{ -0.8, -0.2, 0.2, 0.8 \}$.}
    \label{fig:sim_k10}
\end{figure}

\section{Additional error control proofs}
\subsection{Proof of \corref{AdaBon_FDX_control}}\label{app:AdaBon_FDX_control}
\begin{proof}[\unskip\nopunct]
We have that
\begin{align*}
    \text{FDX}(\mathcal{R}) &= \Pr \left( \frac{ \sum^m_{i = 1} I \{ S_i \leq \hat{\tau} \} I \{ \text{$H^{u/n}_i$ is true} \}}{1 \vee \sum^m_{i = 1} I \{ S_i \leq \hat{\tau} \}} \geq \gamma \right) \\
    &\leq 
    \Pr \left( \frac{\sum^m_{i = 1} I \{ S_i \leq \hat{\tau} \} I \{ \text{$H^{u/n}_i$ is true} \}} {1 \vee\sum^m_{i = 1} I \{ S_i \leq \hat{\tau} \}} \geq \frac{\sum^m_{i=1} I \{ \hat{t}_{\theta} \leq S_i \leq \hat{\tau} \} + k}{1 \vee \sum^m_{i = 1} I \{ S_i \leq \hat{\tau} \}} \right) \\
    &= \Pr \left( \sum^m_{i = 1} I \{ S_i \leq \hat{\tau} \} I \{ \text{$H^{u/n}_i$ is true} \} \geq \sum^m_{i=1} I \{ \hat{t}_{\theta} \leq S_i \leq \hat{\tau} \} + k \right) \\
    &\leq \Pr \left( \sum^m_{i = 1} I \{ S_i \leq \hat{\tau} \} I \{ \text{$H^{u/n}_i$ is true} \} \geq \sum^m_{i = 1} I \{ \hat{t}_{\theta} \leq S_i \leq \hat{\tau} \} I \{ \text{$H^{u/n}_i$ is true} \} + k \right) \\
    &\leq \Pr \left( \sum^m_{i = 1} I \{ S_i < \hat{t}_{\theta} \} I \{ \text{$H^{u/n}_i$ is true} \} \geq  k \right) \\
    &= k\text{-FWER}(\mathcal{R})
\end{align*}
where the first inequality follows from \eqref{hat_tau}. Given the display above and \thmref{AdaBon_kFWER_control}, we have that
\begin{equation}\label{FDX_bound}
  \limsup_{m \longrightarrow \infty} \text{FDX}(\mathcal{R}) \leq   \limsup_{m \longrightarrow \infty}  k\text{-FWER}(\mathcal{R})  \leq \alpha,
\end{equation}
which proves FDX control. To prove FDR control, note that
\begin{align*}
    \text{FDR}(\mathcal{R}) &= \mathbb{E} \left[ \frac{ \sum^m_{i = 1} I \{ S_i \leq \hat{\tau} \} I \{ \text{$H^{u/n}_i$ is true} \}}{1 \vee \sum^m_{i = 1} I \{ S_i \leq \hat{\tau} \}} \right] \\
    &\leq \mathbb{E} \left[ \frac{ \sum^m_{i = 1} I \{ S_i \leq \hat{\tau} \} I \{ \text{$H^{u/n}_i$ is true} \}}{1 \vee \sum^m_{i = 1} I \{ S_i \leq \hat{\tau} \}} \cdot I \left\{ \frac{ \sum^m_{i = 1} I \{ S_i \leq \hat{\tau} \} I \{ \text{$H^{u/n}_i$ is true} \}}{1 \vee \sum^m_{i = 1} I \{ S_i \leq \hat{\tau} \}} \leq \gamma \right\} \right] \\
    &+ \mathbb{E} \left[ \frac{ \sum^m_{i = 1} I \{ S_i \leq \hat{\tau} \} I \{ \text{$H^{u/n}_i$ is true} \}}{1 \vee \sum^m_{i = 1} I \{ S_i \leq \hat{\tau} \}} \cdot I \left\{ \frac{ \sum^m_{i = 1} I \{ S_i \leq \hat{\tau} \} I \{ \text{$H^{u/n}_i$ is true} \}}{1 \vee \sum^m_{i = 1} I \{ S_i \leq \hat{\tau} \}} \geq  \gamma \right\} \right] \\
   &\leq \gamma + \mathbb{E} \left[ I \left\{ \frac{ \sum^m_{i = 1} I \{ S_i \leq \hat{\tau} \} I \{ \text{$H^{u/n}_i$ is true} \}}{1 \vee \sum^m_{i = 1} I \{ S_i \leq \hat{\tau} \}} \geq  \gamma \right\} \right] \\
    &= \gamma + \Pr \left(\frac{ \sum^m_{i = 1} I \{ S_i \leq \hat{\tau} \} I \{ \text{$H^{u/n}_i$ is true} \}}{1 \vee \sum^m_{i = 1} I \{ S_i \leq \hat{\tau} \}} \geq  \gamma \right) \\
    &= \gamma + \text{FDX}(\mathcal{R}).
\end{align*}
The result then immediately follows from \eqref{FDX_bound}.
\end{proof}

\end{document}